\theoremstyle{plain}
\newtheorem{Th}{Theorem}[section]
\newtheorem{Cor}[Th]{Corollary}
\newtheorem{Lem}[Th]{Lemma}
\newtheorem{Prop}[Th]{Proposition}
\theoremstyle{definition}
\newtheorem{Def}{Definition}[section]
\newtheorem{Ex}{Example}[section]
\theoremstyle{remark}
\newtheorem*{Rem}{Remark}
\numberwithin{equation}{section}
\newcommand{\PP}{{\mathbb P}}
\newcommand{\CC}{{\mathbb C}}
\newcommand{\DD}{{\mathbb D}}
\newcommand{\ZZ}{{\mathbb Z}}
\newcommand{\NN}{{\mathbb N}}
\newcommand{\QQ}{{\mathbb Q}}
\newcommand{\RR}{{\mathbb R}}
\newcommand{\XX}{{\mathbb X}}
\newcommand{\bphi}{\boldsymbol{\phi}}
\begin{document}

\title{The~pentagon~relation and incidence geometry}

\author[A. Doliwa]{Adam Doliwa}
\address{A. Doliwa: Institute of Mathematics, Polish Academy of Sciences,
ul.~\'{S}niadeckich 8, 00-956 Warszawa, Poland}
\email{doliwa@matman.uwm.edu.pl}
\urladdr{http://wmii.uwm.edu.pl/~doliwa/}
\curraddr{Faculty of Mathematics and Computer Science, University of Warmia and Mazury in Olsztyn, ul.~S{\l}oneczna~54, 10-710 Olsztyn, Poland}

\author[S. M. Sergeev]{Sergey M. Sergeev}
\address{S. M. Sergeev: Faculty of Information Sciences and Engineering, University of Canberra, Canberra ACT 2601, Australia}
\email{Sergey.Sergeev@canberra.edu.au}

\date{}
\keywords{pentagon equation, integrable discrete geometry; Desargues configuration; Hirota equation; Poisson maps; Weyl commutation relations; quantum rational functions; non-compact quantum dilogarithm\\
MSC 2010: Primary 37K10; Secondary 37K60, 39A14, 51A20, 16T20\\
PACS 2010: 02.10.Hh, 02.30.Ik, 02.40.Dr, 03.65.-w}

\begin{abstract}
We define a map $S:\DD^2\times \DD^2 \dashrightarrow \DD^2\times \DD^2$, where $\DD$ is an arbitrary division ring (skew field), associated with the Veblen configuration, and we show that such a map provides solutions to the functional dynamical pentagon equation. We explain that fact in elementary geometric terms using the symmetry of the Veblen and Desargues configurations. We introduce also another map of a geometric origin with the pentagon property. We show equivalence of these maps with recently introduced Desargues maps which provide geometric interpretation to a non-commutative version of Hirota's discrete Kadomtsev--Petviashvili equation. 
Finally we demonstrate that in an appropriate gauge the (commutative version of the) maps preserves a natural Poisson structure -- the quasiclassical limit of the Weyl commutation relations. The corresponding quantum reduction is then studied. In particular, we discuss uniqueness of the Weyl relations for the ultra-local reduction of the map. We give then the corresponding solution of the quantum pentagon equation in terms of the non-compact quantum dilogarithm function. 
\end{abstract}
\maketitle

\section{Introduction}
Let $\mathcal{A}$ be an associative unital algebra over a field $\Bbbk$, an element $\boldsymbol{S}\in \mathcal{A}\otimes\mathcal{A}$ is said to satisfy the quantum pentagon equation if
\begin{equation} \label{eq:pentagon-A}
\boldsymbol{S}_{23} \boldsymbol{S}_{13} \boldsymbol{S}_{12} = \boldsymbol{S}_{12} \boldsymbol{S}_{23} \qquad \text{in} \quad \mathcal{A}\otimes \mathcal{A} \otimes \mathcal{A}, 
\end{equation}
where $\boldsymbol{S}_{ij}$ acts as $\boldsymbol{S}$ on $i$-th and $j$-th factors in the tensor product and leaves unchanged elements in the remaining factor. Equation \eqref{eq:pentagon-A} looks like a degenerate version of the quantum Yang--Baxter equation, well known in the theory of exactly solvable models of statistical mechanics and quantum field theory \cite{Baxter,QISM,JimboMiwa}, and quantum groups \cite{Kassel}. There are also some similarities in constructing solutions of both equations, for example the role of the Drinfeld double construction \cite{Drinfeld-qg} of solutions of the quantum Yang--Baxter equation is replaced by the Heisenberg double \cite{STS,DaeleKeer,Lu,Kashaev-P}. However, in modern theory of quantum groups \cite{BaajSkandalis,Woronowicz-P,KustermansVaes} (see also \cite{Timmermann} for a review, and \cite{Militaru} for discussion of the finite dimensional case) the quantum pentagon equation seems to play more profound role. Remarkably, given a solution of \eqref{eq:pentagon-A} satisfying some additional non-degeneracy conditions, it allows to construct all the remaining structure maps of a quantum group and of its Pontrjagin dual simultaneously. 

Let $\XX$ be a set, $S:\XX\times \XX \to \XX\times \XX$ be a map from its square into itself. We call $S$ \emph{pentagon map} if it satisfies the functional (or set-theoretical) pentagon equation \cite{Zakrzewski}
\begin{equation} \label{eq:pentagon-S}
S_{12}\circ  S_{13} \circ S_{23} = S_{23} \circ S_{12}, \qquad \text{on} \quad \XX \times \XX \times \XX,
\end{equation}
regarded as an equality of composite maps; here again $S_{ij}$ acts as $S$ in $i$-th and $j$-th factors of the Cartesian product. One can consider a parameter (it may be functional) dependent version, then the parameters of the five maps in \eqref{eq:pentagon-S} may be constrained by relations involving also the dynamical variables. 

The corresponding functional Yang--Baxter equation \cite{Sklyanin,Drinfeld-fYB} 
has been studied recently \cite{ABS-consistency,Veselov-YB,PapTonVes} in connection to integrability (understood as the multidimensional consistency \cite{Nijhoff-c,BS-c1,BS-nc,ABS-consistency}) of two dimensional lattice equations.
A generalization of the quantum Yang-Baxter equation for three dimensional models is the tetrahedron equation proposed by Zamolodchikov \cite{Zamolodchikov}. 
It is related \cite{BaMaSe} to four dimensional consistency of the discrete Darboux equations \cite{BoKo}, or equivalently, to four dimensional compatibility of the geometric construction scheme of the quadrilateral lattice \cite{MQL}. It turns out \cite{BaMaSe-P} that all presently known solutions of the quantum tetrahedron equation can be obtained, by a canonical quantization, from classical solutions of the functional tetrahedron equation derived from the quadrilateral lattices.

Recently it has been observed \cite{Dol-Des} that the quadrilateral lattice theory can be considered as a part of the theory of the Desargues maps, which describe in geometric terms Hirota's discrete Kadomtsev--Petviashvili (KP) equation \cite{Hirota} and its integrability properties. 
It is known that the Hirota equation encodes \cite{Miwa} the KP hierarchy of integrable equations~\cite{DKJM}. It plays also an important role in many branches of mathematics and theoretical physics related to integrability; see~\cite{KNS-rev} for a recent review of some of its application. There is a natural question what should replace Zamolodchikov's tetrahedron equation in the transition from quadrilateral lattices do Desargues maps. We demonstrate that the answer is provided by the pentagon equation.

To make presentation more transparent we start in Section~\ref{sec:geom} from the geometric essence of the construction based on elementary considerations on the Veblen and Desargues configurations \cite{BeukenhoutCameron-H} and their symmetry groups. Given five points of the Veblen configuration we are (almost) uniquely given the last one, what we call the \emph{Veblen flip}. The presence of five Veblen configurations within the Desargues configuration gives the pentagon property of the Veblen flip. 
By parametrizing the Veblen flip using homogeneous coordinates from a division ring $\DD$ we obtain in Section~\ref{sec:alg} a birational map of $\DD^2\times\DD^2$ into itself, which contains a functional gauge parameter. We check that the Veblen map satisfies the functional pentagon equation, provided the functional parameters are restricted by some additional relations. In Section~\ref{sec:norm} we discuss another solution, without parameters, of the pentagon map together with its geometric interpretation. Section~\ref{sec:Hir} is devoted to presentation of the equivalence to the above solutions of the pentagon equation with the Desargues maps and the non-commutative Hirota (discrete KP) equation. 

The quantization procedure, which we present in Section~\ref{sec:quant}, can be understood in two ways. First, under appropriate choice of the gauge in the commutative case the Veblen map leaves invariant a natural Poisson structure. Such a map preserves also the ultra-local Weyl commutation relations which quantize that structure. From other point of view the procedure can be considered as integrable reduction of the generic division ring solution of the functional pentagon equation to a particular division algebra. This allows to rewrite the map as inner automorphism, and to obtain the corresponding solution of the dynamical quantum pentagon equation, which can be expressed using the quantum dilogarithm function. Finally, in addition to the concluding remarks we present several open problems and research directions.

\section{The Veblen flip and its pentagon property}
\label{sec:geom}
Below we present elementary considerations, which form however a geometric core of the paper with far reaching consequences.
\begin{figure}
\begin{center}
\includegraphics[width=14cm]{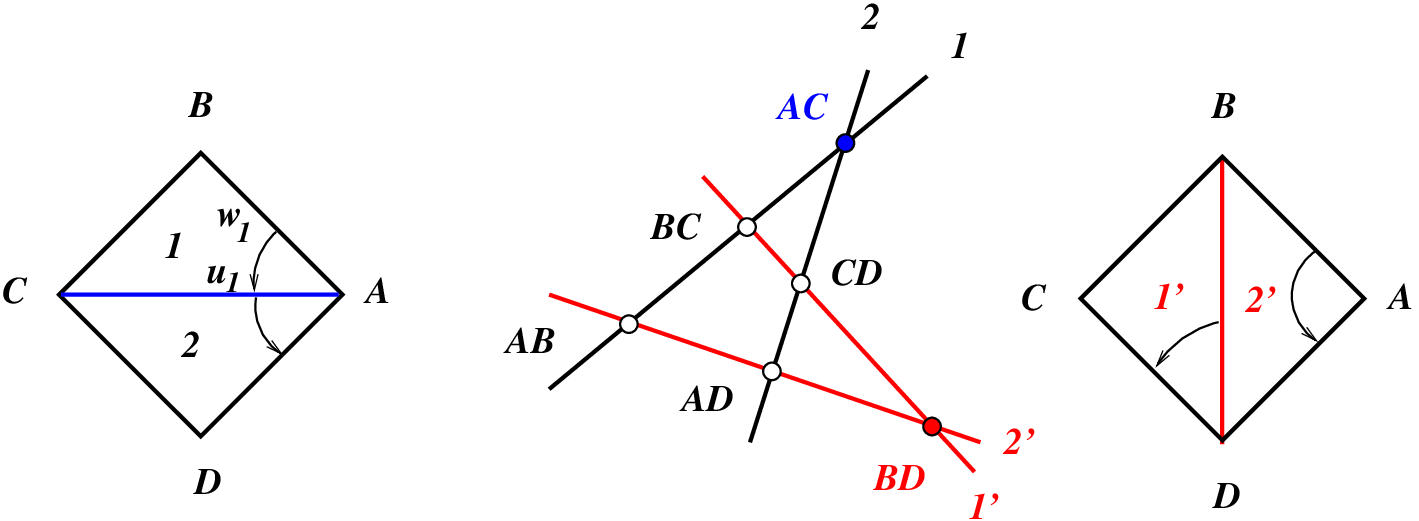}
\end{center}
\caption{The Veblen flip $f_{ABCD}$ and its tetrahedron representation. Faces of the tetrahedron represent lines of the configuration, and the arrows denote position of the $w$ and $u$ coefficients in the normalization of the corresponding linear relations discussed in Section~\ref{sec:alg}.}
\label{fig:Veblen-flip-tetr}
\end{figure}

\subsection{Geometry and combinatorics of the Veblen and Desargues configurations}
\label{sec:GC-VD}
Consider the Veblen configuration $(6_2 , 4_3)$ of six points and four lines, each point/line is incident with exactly two/three lines/points; see Fig.~\ref{fig:Veblen-flip-tetr}. 
To elucidate the $S_4$ symmetry group of the configuration is convenient to label its points by two-element subsets of the four-element set $\{A,B,C,D\}$, and lines by three-element subsets; the incidence relation is defined by containment. This combinatorial description  of the Veblen configuration has a geometric origin~\cite{Cayley}, which associates (in a non-unique way) with the configuration four points $A, B, C, D$ in general position in $\PP^3$. Six lines of edges of the simplex and its four planes intersected by a generic plane form six points and four lines the Veblen configuration (on the intersection plane). 

The Desargues configuration is of the type $(10_3)$, i.e. it consists of ten points and ten lines, each line/point is incident with exactly three points/lines. It is known that combinatorially there are ten such distinct configurations \cite{Grunbaum}. The Desargues configuration is selected by the property that it contains five Veblen configurations. Again, it is possible to label points of the configuration by two-element subsets of the five-element set $\{A,B,C,D,E\}$, and lines by three-element subsets; the incidence relation is defined by containment. Similarly, like for the Veblen configuration, given five points in general position in $\PP^4$, consider lines joining pairs of points, and planes defined by the triples. A section of such a system of ten lines and  ten planes by a generic three-dimensional hyperplane gives a Desargues configuration, see Fig.~\ref{fig:Desargues-simplex}. Five four-element subsets give rise to five Veblen configurations.

\subsection{Geometry and combinatorics of the Veblen flip}

\begin{Def}
Given two ordered pairs $( P_1, P_2 )$, $ ( P_3, P_4 )$ of distinct and non-collinear points of a projective space such that the lines $\langle P_1, P_2 \rangle$ and $ \langle P_3, P_4 \rangle$ are coplanar, and thus intersect in the point $P_5$. We denote such five points by $\{( P_1, P_2 ), ( P_3, P_4 ), P_5 \}$ and say that they satisfy the Veblen configuration condition.
\end{Def} 
We can complete these five points (and two lines), by the intersection point $P_6$ of the two lines $\langle P_1, P_3 \rangle$ and $ \langle P_2, P_4 \rangle$ (and the lines), to the Veblen configuration; notice that the ordering in pairs is important, because $P_6\neq\langle P_1, P_4 \rangle \cap \langle P_2, P_3 \rangle$. If we remove the old intersection point $P_5$, which does not belong to two new intersecting lines we obtain new system $\{( P_1, P_3 ), ( P_2, P_4 ), P_6 \}$ satisfying the Veblen configuration condition. Such an involutory transition we call \emph{a Veblen flip}. 

Given five points satisfying the Veblen configuration condition we can label them, in the way described above in Section~\ref{sec:GC-VD}, by five edges of the three-simplex or, equivalently (up to an action of $S_4$), by five two-element subsets of a four element set. The edge representing the intersection point $P_5$ contains two vertices of valence three, the sixth edge will represent the point $P_6$; see Fig.~\ref{fig:Veblen-flip-tetr} which illustrates the Veblen flip $\{( AB, BC ), ( AD, CD ), AC \} \mapsto \{( AB, AD ), ( BC, CD ), BD \} $ . 

\subsection{Geometry of the pentagon relation satisfied by the Veblen flip}
\begin{figure}
\begin{center}
\includegraphics[width=12cm]{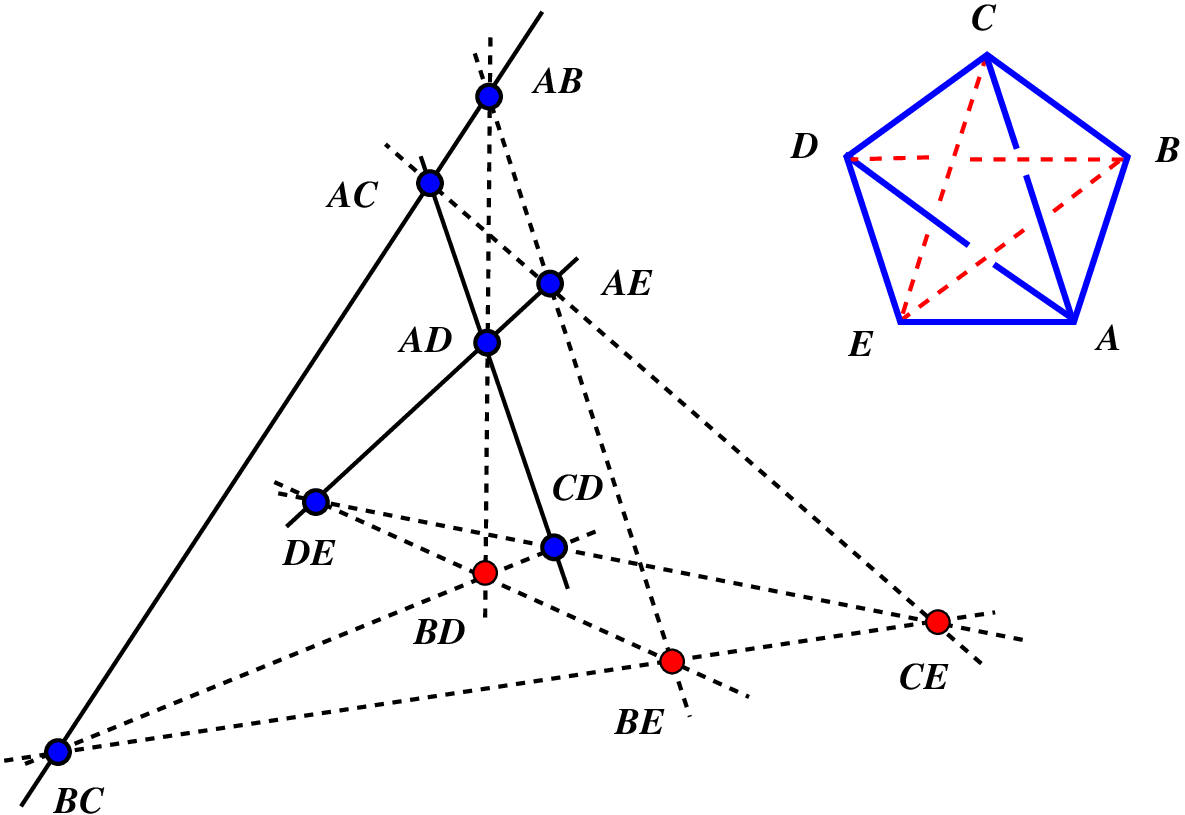}
\end{center}
\caption{The Desargues configuration and its four-simplex combinatorics. The (initial) seven points and three lines, and their four-simplex counterparts, used to study the pentagon property of the Veblen flip are  distinguished by solid lines.}
\label{fig:Desargues-simplex}
\end{figure}
To understand the geometric origin of the pentagon relation property of the Veblen flip let us start from two sets of points satisfying the Veblen configuration property with three points and one line in common, see Fig.~\ref{fig:Desargues-simplex}. Without loss of generality (the symmetry group of Desargues configuration is the permutation group $S_5$) we consider points  $AC$, $AD$, $CD$ of the line $ACD$, two points $AB$ and $BC$ on the line $ABC$, and two points $AE$ and $DE$ on the line $ADE$. By a sequence of Veblen flips we can recover all the other points of the Desargues configuration.
\begin{Rem}
The Desargues configuration is considered usually in relation to the celebrated Desargues theorem valid in projective spaces over division rings~\cite{BeukenhoutCameron-H}. It states that two triangles (e.g. $\triangle_{AB,AC,AE}$ and $\triangle_{BD,CD,DE}$ on Fig.~\ref{fig:Desargues-simplex}) are in perspective from a point ($AD$ in our case) if and only if they are in perspective from a line (that passing through the remaining three points $BC$, $CE$, $BE$ of the configuration, which are constructed as intersections of the corresponding sides of the triangles). 
\end{Rem}
If we want to apply the Veblen flip, starting from the initial configuration described above, it can be either the flip $f_{ABCD}$ (in the configuration labelled by the tetrahedron with vertices $A,B,C,D$)
or the flip $f_{ACDE}$. After the first flip we have a similar choice of two flips, but we exclude that related to the Veblen configuration just used (in order not to go back immediately to the initial configuration). Therefore the choice of the first flip determines the next ones. By inspection of Fig.~\ref{fig:pentagon-f} we obtain that  superposition $f_{BCDE}\circ f_{ABCE} \circ f_{ACDE}$ of transformations when applied to the initial configuration, gives the same result as $f_{ABDE} \circ f_{ABCD}$.
\begin{Rem}
In other words, starting from the data described above and going around the diagram we obtain the sequence of Veblen flips of period five.
\end{Rem}
\begin{figure}
\begin{center}
\includegraphics[width=14cm]{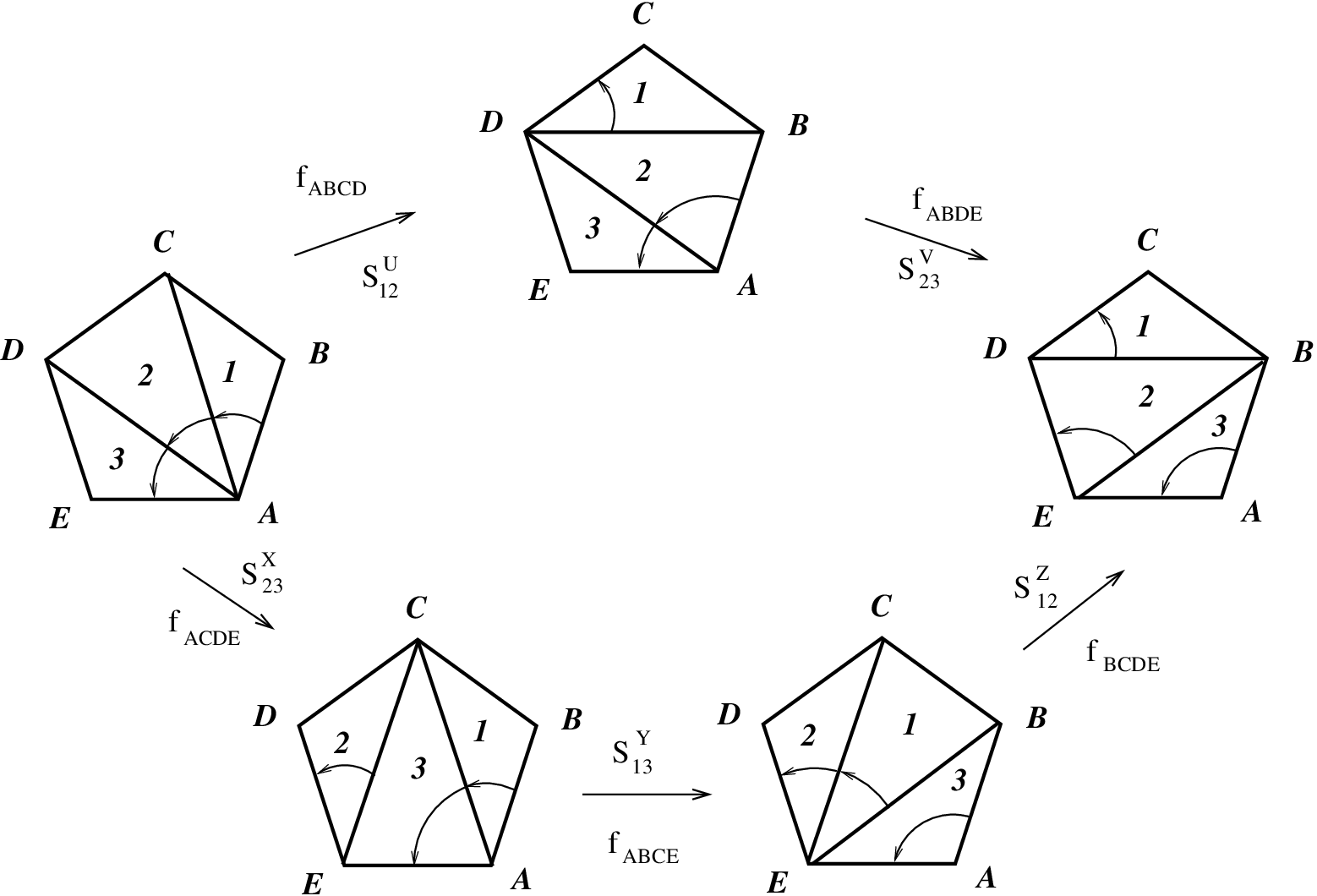}
\end{center}
\caption{The pentagon relation for Veblen flips in the 4-simplex representation}
\label{fig:pentagon-f}
\end{figure}

\section{Algebraic description of the Veblen flip and of its pentagon property}
\label{sec:alg}
The present section is devoted to algebraization of the geometric considerations presented above. In doing that we introduce numerical coefficients (in the non-commutative Hirota equation interpretation they will serve as soliton fields, see Section~\ref{sec:Hir}) attached to vertices of the Veblen configuration. These describe positions of points on the corresponding lines, and the Veblen flip gives a map from a set of old coefficients to the corresponding set of new ones. The pentagon property of the Veblen flip implies that the map satisfies the functional dynamical pentagon equation.

\subsection{Algebraic description of the Veblen flip} 
\label{sec:V-pent}
Consider three distinct collinear points $AB$, $AC$, $BC$ of the (right) projective space $\PP^M(\DD)$ over division ring $\DD$. We label them in the spirit of the combinatorics of the Veblen and Desargues configurations, by edges of a triangle $ABC$. Their collinearity, expressed in terms of the homogeneous coordinates 
$\bphi_{AB}$, $\bphi_{AC}$, $\bphi_{BC} \in \DD^{M+1}$, takes the form of a linear dependence relation 
\begin{equation} \label{eq:collinarity-AB}
\bphi_{BC} = \bphi_{AB} w - \bphi_{AC} u,
\end{equation}
with non-vanishing coefficients $u,w\in{\DD}^{\times}$ (we add the minus sign for convenience).
The above linear dependence relation \eqref{eq:collinarity-AB} is normalized at $\phi_{BC}$ by putting corresponding coefficient equal to one. We can incorporate this new information to the graphic representation on the triangle $ABC$ by adding an arrow from the $w$-edge (i.e. the edge which represents the point with homogeneous coordinates multiplied by the coefficient $w$) to the $u$-edge, as visualized on Fig.~\ref{fig:Veblen-flip-tetr}. 

Consider five distinct points of projective space $\PP^M(\DD)$, which satisfy the Veblen configuration condition and are labeled by five edges of a three-simplex, as explained in Section~\ref{sec:geom}. Two triplets of collinear points give rise to two linear dependence relations. In order to write them down we have to fix enumeration of the lines, and to normalize the corresponding linear relations. 

It turns out that in discussing the pentagon property of the Veblen map it is convenient to choose the normalization in relation to labelling of straight lines as follows: 
\begin{enumerate}
\item Pick up a point ($BC$ on Fig.~\ref{fig:Veblen-flip-tetr}) different from the initial intersection point ($AC$ on Fig.~\ref{fig:Veblen-flip-tetr}) and declare it to belong to the lines $1$ and $1^\prime$, moreover the point is the normalization point on the both lines $1$ and $1^\prime$.
\item The second point on line $1$ ($AB$ on Fig.~\ref{fig:Veblen-flip-tetr}) different from the intersection point is attributed the $w$ coefficients on both lines $1$ and $2^\prime$.
\item The old intersection point is attributed the $w$ coefficient on line $2$, and the new intersection point ($BD$ on Fig.~\ref{fig:Veblen-flip-tetr}) is the $w$-point on line $1^\prime$.
\item The point on the lines $2$ and $2^\prime$ ($AD$ on Fig.~\ref{fig:Veblen-flip-tetr}) is attributed the $u$ coefficients on both the lines.
\end{enumerate}

In algebraic terms we have, as the starting point, homogeneous coordinates of five initial points and four coefficients $w_1$, $u_1$, $w_2$, $u_2$, in the linear relations
\begin{align}
\bphi_{BC} = & \bphi_{AB}w_1 - \bphi_{AC} u_1,\\
\bphi_{CD} = & \bphi_{AC}w_2 - \bphi_{AD} u_2.
\end{align}
In writing down similar linear dependence relations for coordinates of points on two new lines of the Veblen configuration
\begin{align}
\bphi_{BC} = & \bphi_{BD}w_1^\prime - \bphi_{CD} u_1^\prime , \label{eq:V-uw1} \\
\bphi_{BD} = & \bphi_{AB} w_2^\prime - \bphi_{AD} u_2^\prime, \label{eq:V-uw2}
\end{align}
we are looking for coordinates $\bphi_{BD}$ of the new intersection point, and for the corresponding coefficients $w_1^\prime$, $u_1^\prime$, $w_2^\prime$, $u_2^\prime$. The transformation formulas read
\begin{align} \label{eq:S-wu1}
w_1^\prime & = G u_1,  & u_1^\prime & = w_2^{-1} u_1 , \\
\label{eq:S-wu2}
w_2^\prime & = w_1 u_1^{-1} G^{-1}, & u_2^\prime  & = u_2 w_2^{-1} G^{-1}.
\end{align}
Here $G$ is a free parameter which expresses possibility of multiplying $\bphi_{BD}$ by a non-zero factor. Indeed, equations \eqref{eq:V-uw1}, \eqref{eq:V-uw2} give
\begin{equation} \label{eq:S-G-BD}
\bphi_{BD} = 
\left( \bphi_{BC} u_1^{-1} + \bphi_{CD} w_2^{-1} \right) G^{-1} = \left( \bphi_{AB} w_1 u_1^{-1} - \bphi_{AD} u_2w_2^{-1}\right) G^{-1}.
\end{equation}
\begin{Rem}
The inverse transformation is given as follows
\begin{align} \label{eq:S-inv-wu1}
w_1 & = w_2^\prime w_1^\prime, & u_1 & = G^{-1} w_1^\prime ,\\ 
\label{eq:S-inv-wu2}
w_2 & = G^{-1} w_1^\prime u_1^{\prime \, -1} , & 
u_2 & = u_2^\prime w_1^\prime u_1^{\prime \, -1}.
\end{align}
Actually, we could take an arbitrary non-zero parameter $G^\prime$, instead of $G$, which would result in multiplying of the original homogeneous coordinates $\bphi_{AC}$ by an appropriate factor.
\end{Rem}

Equations \eqref{eq:V-uw1} and \eqref{eq:V-uw2} have initial interpretation on \emph{the linear algebra level} as transformation (involving a numerical parameter) between numerical coefficients describing positions of points of the Veblen configuration. There is another level of looking on the formulas, which can be called \emph{the algebraic geometry level}, i.e. we consider $w_1$, $u_1$, $w_2$, and $u_2$ as non-commuting indeterminates in the corresponding universal skew field of fractions~\cite{Cohn-constructions}.  Then also the gauge coefficient $G$ can vary when changing the indeterminates, what allows for interpretation of $G=G(w_1,u_1,w_2,u_2)$ as an arbitrary rational function of the four variables. In this way by choosing function $G$ we may interpret formulas \eqref{eq:V-uw1}-\eqref{eq:V-uw2} as a definition of the~rational map $S^G:\DD^2\times \DD^2 \dashrightarrow \DD^2\times \DD^2$. Demanding invertibility of the map $S^G$ we restrict from now on our attention to birational maps, what imposes certain conditions on the form of admissible gauge functions $G$. 
\begin{Rem}
On the algebraic geometry level
in the case of birational map $S^G$, we are looking for the inverse map in the form of equations 
\eqref{eq:S-inv-wu1}-\eqref{eq:S-inv-wu2}. Therefore we can interpret the gauge parameter $G^\prime$ as a new rational function 
$G^\prime = G^\prime (w_1^\prime, u_1^\prime, w_2^\prime, u_2^\prime)$ of four indeterminates. Equality of the gauge parameters on the linear algebra level is then transfered into the following  functional relation between the gauge functions, which we write down in its full expanded form
\begin{equation} \label{eq:G-prime}
\begin{split}
G^\prime & \left( G(w_1, u_1, w_2, u_2) u_1, w_2^{-1}u_1, 
w_1 u_1^{-1} G(w_1, u_1, w_2, u_2)^{-1} , u_2 w_2^{-1} G(w_1, u_1, w_2, u_2)^{-1} \right)\\ & =  G(w_1, u_1, w_2, u_2). 
\end{split}
\end{equation}
In short notation, the functional relation \eqref{eq:G-prime} reads
\begin{equation*}
G^\prime (w_1^\prime, u_1^\prime, w_2^\prime, u_2^\prime) = G(w_1, u_1, w_2, u_2),
\end{equation*}
where $w_1^\prime$, $u_1^\prime$, $w_2^\prime$, $u_2^\prime$ are given by equations \eqref{eq:S-wu1}-\eqref{eq:S-wu2}.
\end{Rem}

\begin{Ex} \label{ex:gauge}
Fix the gauge function $G$ by requiring $w_2^\prime = w_2$, which gives 
\begin{equation*}
G(w_1,u_1,w_2,u_2) = w_2^{-1} w_1 u_1^{-1},
\end{equation*}
and the transformation formulas read
\begin{align*} 
w_1^\prime & = w_2^{-1} w_1,  & u_1^\prime & = w_2^{-1} u_1 , \\
w_2^\prime & = w_2, & u_2^\prime  & = u_2 w_2^{-1} u_1 w_1^{-1} w_2.
\end{align*}
The inverse transformation is of the form
\begin{align*} 
w_1 & = w_2^\prime w_1^\prime, & u_1 & = w_2^\prime u_1^\prime ,\\ 
w_2 & = w_2^\prime , & 
u_2 & = u_2^\prime w_1^\prime u_1^{\prime \, -1},
\end{align*}
and defines the function
\begin{equation*}
G^\prime (w_1^\prime, u_1^\prime, w_2^\prime, u_2^\prime) = w_1^\prime u_1^{\prime \, -1} w_2^{\prime\, -1}
\end{equation*}
which satisfies condition \eqref{eq:G-prime}. 
\end{Ex}

\subsection{Algebraic description of the pentagon relation}
The algebraic counterpart of the pentagon relation satisfied by the Veblen flips $f$ will be an analogous relation on the algebraic geometry level of the map $S^G$. 

Let us first discuss the implication of the pentagon relation between the Veblen flips on the linear algebra level. 
Starting from the initial configuration of seven points on three lines 
(see Fig.~\ref{fig:Desargues-simplex}) we have three linear relations
\begin{align*}
\bphi_{BC} = & \bphi_{AB}w_1 - \bphi_{AC} u_1,\\
\bphi_{CD} = & \bphi_{AC}w_2 - \bphi_{AD} u_2,\\
\bphi_{DE} = & \bphi_{AD}w_3 - \bphi_{AE} u_3,
\end{align*}
normalized according to Fig.~\ref{fig:pentagon-f}. We perform the transformation in the first two equations with the gauge parameter $U$ \, and then we perform the transformation in the second and the third equation with the gauge parameter $V$. At the end we obtain the final linear relations
\begin{align*}
\bphi_{BC} = & \bphi_{BD}\hat{w}_1 - \bphi_{CD} \hat{u}_1,\\
\bphi_{BD} = & \bphi_{BE} \hat{w}_2 - \bphi_{DE} \hat{u}_2,\\
\bphi_{BE} = & \bphi_{AB} \hat{w}_3 - \bphi_{AE} \hat{u}_3,
\end{align*}
compare Fig.~\ref{fig:pentagon-f}. 

The resulting transformation $(w_i,u_i)_{i=1}^3 \to (\hat{w}_i,\hat{u}_i)_{i=1}^3$ of the coefficients reads (abusing the notation we write $S^U_{12}$ and $S^V_{23}$ like on the algebraic geometry level)
\begin{equation*}
\left( \begin{array}{ll}
w_1 & u_1 \\
w_2 & u_2 \\
w_3 & u_3
\end{array} \right)
\xrightarrow{S_{12}^U}
\left( \begin{array}{ll}
U u_1 & w_2^{-1} u_1 \\
w_1 u_1^{-1} U^{-1} & u_2 w_2^{-1} U^{-1} \\
w_3 & u_3
\end{array} \right)
\xrightarrow{S_{23}^V}
\left( \begin{array}{ll}
U u_1 & w_2^{-1} u_1 \\
V u_2 w_2^{-1} U^{-1} & w_3^{-1} u_2 w_2^{-1} U^{-1} \\
w_1 u_1^{-1} w_2 u_2^{-1} V^{-1}  & u_3 w_3^{-1} V^{-1}
\end{array} \right).
\end{equation*}

According to our previous geometric considerations the final set of linear relations can be also obtained by the sequence of three transformations: (i) in the second and the third relation with a parameter which we call $X$, (ii) in the first and third (new) relation with a parameter $Y$, (iii) in the first and the second relation with a parameter $Z$
\begin{gather*}
\left( \begin{array}{ll}
w_1 & u_1 \\
w_2 & u_2 \\
w_3 & u_3
\end{array} \right)
\xrightarrow{S_{23}^X}
\left( \begin{array}{ll}
w_1 & u_1 \\
X u_2 &  w_3^{-1} u_2 \\
w_2 u_2^{-1} X^{-1} & u_3 w_3^{-1} X^{-1}
\end{array} \right)
\xrightarrow{S_{13}^Y}
\left( \begin{array}{ll}
Y u_1 & X u_2 w_2^{-1} u_1 \\
X u_2 &  w_3^{-1} u_2 \\
w_1 u_1^{-1} Y^{-1}  & u_3 w_3^{-1} u_2 w_2^{-1} Y^{-1}
\end{array} \right) \xrightarrow{S_{12}^Z}\\
\xrightarrow{S_{12}^Z}
\left( \begin{array}{ll}
ZX u_2 w_2^{-1} u_1 & w_2^{-1} u_1 \\
Y w_2 u_2^{-1} X^{-1} Z^{-1} & w_3^{-1} X^{-1} Z^{-1} \\
w_1 u_1^{-1} Y^{-1}  & u_3 w_3^{-1} u_2 w_2^{-1} Y^{-1}
\end{array} \right).
\end{gather*}
Notice however, that the actual equality holds only on the geometric level of points in the projective space, while on the (linear algebra) level of their homogeneous coordinates we should take into account possibile rescaling. Homogeneous coordinates of the two new points $BD$ and $BE$ (the second sequence of transformations gives also the point $CE$ which is not produced by the first sequence) have then double expressions 
\begin{align*}
\bphi_{BD}  & = \left(  \bphi_{BC} u_1^{-1} + \bphi_{CD} w_2^{-1} \right) U^{-1} =
\left(  \bphi_{BC} u_1^{-1} w_2 + \bphi_{CD} \right) u_2^{-1} X^{-1} Z^{-1},\\
\bphi_{BE} & = \left( \bphi_{AB} w_1 u_1^{-1} w_2 u_2^{-1} - \bphi_{AE} u_3 w_3^{-1}  \right) V^{-1} =  \left( \bphi_{AB} w_1 u_1^{-1}  - \bphi_{AE} u_3 w_3^{-1} u_2 w_2^{-1} \right) Y^{-1},
\end{align*}
found with the help of equation \eqref{eq:S-G-BD}. Therefore, to have equality of the homogenous coordinates of the new points and of the coefficients in the both expressions of the final linear relations the gauge parameters should be adjusted according to the following equations
\begin{equation} \label{eq:UV-XYZ}
U = ZX u_2 w_2^{-1}, \qquad
V = Y w_2 u_2 ^{-1}.
\end{equation}

On the algebraic geometry level equations \eqref{eq:UV-XYZ} have more complicated interpretation. Consider five rational functions $U,V,X,Y,Z\colon \DD^2 \times \DD^2 \dashrightarrow \DD$ of four variables. 
We call the functions pentagon-compatible if they satisfy in  $\DD^2 \times \DD^2\times\DD^2 $ equations of the form \eqref{eq:UV-XYZ}, where we write (the order of functions in the system is important)
\begin{gather}
\nonumber U = U(w_1,u_1,w_2,u_2), \qquad X = X(w_2,u_2,w_3,u_3), \\ \label{eq:UV-XYZ-funct}
V = V(w_1 u_1^{-1} U^{-1},u_2 w_2^{-1} U^{-1}, w_3,u_3), \qquad 
Y = Y(w_1,u_1, w_2 u_2^{-1} X^{-1}, u_3 w_3^{-1} X^{-1}), \\
\nonumber Z = Z(Y u_1 , X u_2 w_2^{-1} u_1 , X u_2 , w_3^{-1} u_2 ).
\end{gather}
Notice that the arguments of the functions above coincide with arguments of the corresponding Veblen maps considered in the two sequences of transformations. In the expanded form the functional equations look rather complicated, and one can even wonder if there exists any pentagon-compatible system of functions.
\begin{Ex} \label{ex:gauge-2}
If the gauge functions $U,V,X,Y,Z$ are of the same form as $G$ in Example~\ref{ex:gauge} then this choice gives a solution to equations \eqref{eq:UV-XYZ}-\eqref{eq:UV-XYZ-funct}.
\end{Ex}

The Theorem below can be verified directly, but actually it follows from the considerations above.
\begin{Th} \label{th:pentagon-S}
Given five rational functions of four variables $U,V,X,Y,Z\colon \DD^2 \times \DD^2 \dashrightarrow \DD$ which are pentagon-compatible then the Veblen maps $S^{G}:\DD^2\times\DD^2 \dashrightarrow \DD^2\times\DD^2 $, where $G$ is one of $U,V,X,Y,Z$, satisfy the functional dynamical pentagon relation on $\DD^2\times\DD^2 \times\DD^2$
\begin{equation} \label{eq:pentagon-funct-dynam}
S_{12}^Z \circ S_{13}^Y \circ S_{23}^X = S_{23}^V \circ S_{12}^U.
\end{equation}
\end{Th}

\section{The normalization map and its pentagon property}
\label{sec:norm}
In this Section we present another map with the pentagon property. We give also its simple geometric meaning related to four collinear points. This map will be used in the next Section, where we discuss its role in establishing the relation of the Veblen map to the Hirota equation. First we discuss another auxiliary map of period three.

Consider a map of vertices $A,B,C$ of a triangle into a projective space $\PP^M(\DD)$ such that all the vertices are mapped into collinear (but different) points. The corresponding linear constraint involving the homogeneous coordinates of the points
\begin{equation*}
\bphi_A = \bphi_C y - \bphi_B x
\end{equation*}
we call normalized at $\bphi_A$, while $\bphi_B$ and $\bphi_C$ are called coordinates of the $x$ and $y$-point, respectively. Graphically, we put an arrow at the vertex of the triangle representing the normalization point. The arrow is directed from the edge joining the (vertex representing the) normalization point and the $x$-point, see Fig.~\ref{fig:triangle-lin-ABCD}. Notice that this is the different normalization rule than that used in Section~\ref{sec:V-pent}. 
\begin{figure}
\begin{center}
\includegraphics[width=14cm]{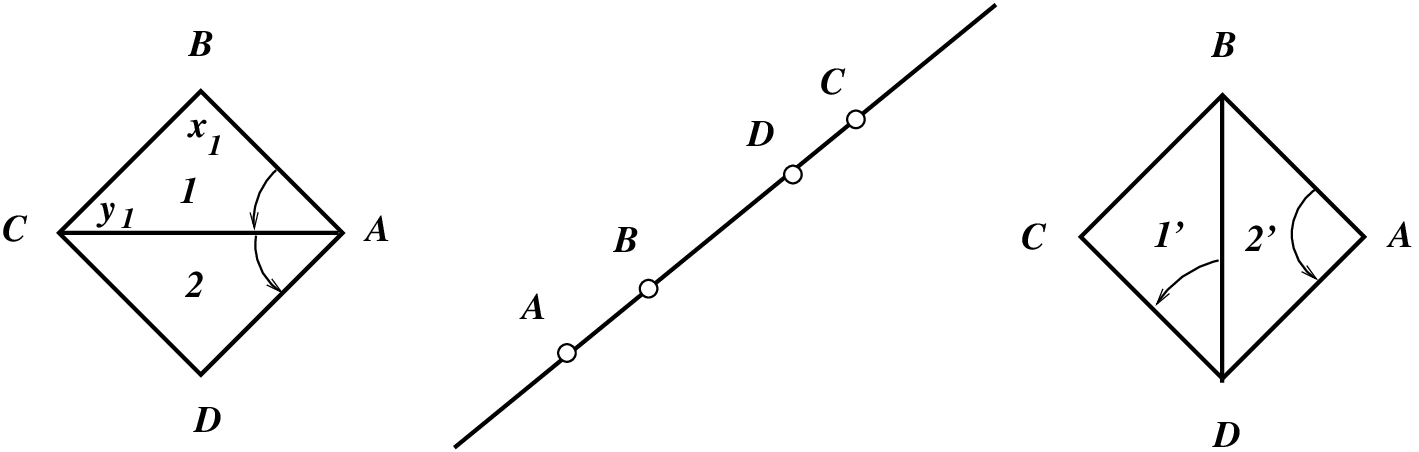}
\end{center}
\caption{Graphic representation of the linear relations for the normalization map}
\label{fig:triangle-lin-ABCD}
\end{figure}
\begin{Rem}
Notice that by changing the normalization point "by $2\pi/3$ rotation", where the orientation is given by the arrow, we arrive at an equivalent linear relation 
\begin{equation}
\bphi_{B} = \bphi_{A} \tilde{y} - \bphi_{C} \tilde{x},
\end{equation}
where 
\begin{equation} \label{eq:Newton-D}
\left( \tilde{y}, \tilde{x} \right) = N(y,x) = \left( - x^{-1}, - y x^{-1} \right), \qquad (y,x ) = ( \tilde{x} \tilde{y}^{-1}, - \tilde{y}^{-1}), \qquad N^3 = \mathrm{id}. 
\end{equation}
The birational map $N:\DD^2 \dashrightarrow \DD^2$, which can be called the non-commutative Newton map~\cite{Dieudonne}, is of order three.
\end{Rem}

Let us consider four collinear (but distinct) points of a projective space $\PP^M(\DD)$. We label them by vertices $A,B,C,D$ of a three-simplex. Four triangular faces of the simplex give four linear relations (up to possible normalizations). By analogy with the Veblen map presented in Section~\ref{sec:V-pent} we write first two of them
\begin{align} \label{eq:lin-S-bar-12}
\bphi_A = & \bphi_C y_1 - \bphi_B x_1,\\
\bphi_A = & \bphi_D y_2 - \bphi_C x_2,
\end{align}
and transfer them into the remaining equations 
\begin{align}
\bphi_D = & \bphi_C \bar{y}_1 - \bphi_B \bar{x}_1,\\
\bphi_A = & \bphi_D \bar{y}_2 - \bphi_B \bar{x}_2.\label{eq:lin-S-bar-12-prime}
\end{align}
The relation between old and new coefficients reads
\begin{align} \label{eq:S-bar-yx1}
\bar{y}_1 & = (y_1 + x_2) y_2^{-1},  & \bar{x}_1 & = x_1 y_2^{-1} \\
\bar{y}_2 & =  y_2 (x_2 + y_1)^{-1}y_1, &\bar{x}_2 &= x_1 (x_2 + y_1)^{-1}x_2, \label{eq:S-bar-yx2}
\end{align}
and defines a birational map $\bar{S}:\DD^2\times\DD^2 \dashrightarrow  \DD^2\times\DD^2$.
The inverse transformation is given by
\begin{align}
y_1 & = \bar{y}_1 \bar{y}_2  , & x_1 & = \bar{x}_2 + \bar{x}_1 \bar{y}_2 \\
y_2 & =  \bar{y}_2  + \bar{x}_1^{-1}\bar{x}_2, & x_2 &= \bar{y}_1 \bar{x}_1^{-1}\bar{x}_2.
\end{align}
The map $\bar{S}:\DD^2\times\DD^2 \dashrightarrow \DD^2\times\DD^2$, given by equations \eqref{eq:S-bar-yx1}-\eqref{eq:S-bar-yx2} is called \emph{the (change of the) normalization} map.
\begin{Rem}
This time homogeneous coordinates of all points are fixed, which results in the absence of gauge parameters in the map.
\end{Rem}
\begin{Rem}
Equations \eqref{eq:S-bar-yx2} can be rewritten in the form
\begin{equation} \label{eq:S-Hir-alg}
\bar{y}_2^{-1} y_2 - y_1^{-1}x_2 = 1, \qquad \bar{x}_2^{-1} x_1 - x_2^{-1} y_1 = 1,
\end{equation}
which will be relevant in Section~\ref{sec:Hir}.
\end{Rem}

Finally, let us consider five collinear points and label them by vertices of the four-simplex. 
\begin{Prop}
The normalization map $\bar{S}:\DD^2\times\DD^2 \dashrightarrow \DD^2\times\DD^2$, given by equations \eqref{eq:S-bar-yx1}-\eqref{eq:S-bar-yx2} satisfies the functional pentagon relation \eqref{eq:pentagon-S}.
\end{Prop}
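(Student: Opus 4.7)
The plan is to mimic the two-pronged strategy used for Theorem~\ref{th:pentagon-S}, now in the degenerate setting of five collinear points. The geometric idea is to label five distinct collinear points of $\PP^M(\DD)$ by the vertices $A, B, C, D, E$ of a four-simplex. The ten triangular 2-faces of the simplex each carry a linear dependence relation among the three corresponding homogeneous coordinates, and these relations span a three-dimensional vector space (since five points on a line satisfy exactly three independent linear relations). The five tetrahedral 3-faces $ABCD$, $ABCE$, $ABDE$, $ACDE$, $BCDE$ are the arenas in which the normalization map $\bar{S}$ is applied, one per tetrahedron---each $\bar{S}$ swaps two of the four normalized relations carried by the faces of its tetrahedron for the other two.

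The setup of initial data mirrors Section~\ref{sec:V-pent}. I would choose three linear relations all normalized at $\bphi_A$, one on each of the triangular faces $ABC$, $ACD$, $ADE$, with coefficient pairs $(y_1,x_1)$, $(y_2,x_2)$, $(y_3,x_3)$ respectively. This gives a point of $\DD^2\times\DD^2\times\DD^2$ on which both sides of \eqref{eq:pentagon-S} act. Each factor $\bar{S}_{ij}$ appearing in the pentagon is then read as the normalization map applied inside the unique tetrahedron whose set of four triangular faces contains the two triangles currently encoded by slots $i$ and $j$; it transforms those two normalizations into new ones on the two complementary faces of that tetrahedron.

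The main step is to track the sequence of five tetrahedra and the intermediate triples of normalizations for each side of \eqref{eq:pentagon-S}, and to observe that because the homogeneous coordinates $\bphi_A,\bphi_B,\bphi_C,\bphi_D,\bphi_E$ themselves are fixed throughout the procedure, both sides must terminate at the same triple of normalized linear relations on the same three ordered triples of faces. The pentagon identity then follows from uniqueness of the normalized form of a linear dependence on a prescribed triple of distinct collinear points. Unlike the situation of Theorem~\ref{th:pentagon-S}, there are no gauge parameters to adjust, so no analogue of the matching conditions \eqref{eq:UV-XYZ} arises; this is consistent with the Remark following \eqref{eq:S-bar-yx2}.

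The main obstacle is combinatorial bookkeeping: identifying, after each application of $\bar{S}$, which triangular face carries which of the new coefficient pairs and which vertex plays the role of the normalization point, and then relabelling so that the next pair of relations fits the generic template \eqref{eq:lin-S-bar-12}--\eqref{eq:lin-S-bar-12-prime}. Should the geometric route become notationally unwieldy, an equivalent backup is direct verification: apply formulas \eqref{eq:S-bar-yx1}--\eqref{eq:S-bar-yx2} three times along the left-hand composition and twice along the right-hand one, starting from $(y_1,x_1,y_2,x_2,y_3,x_3)\in\DD^6$, and simplify the resulting rational expressions using elementary non-commutative inverse manipulations; here the reformulation \eqref{eq:S-Hir-alg} may shorten the algebra considerably and will in any case be exploited in the next section.
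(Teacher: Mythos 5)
Your proposal is correct and follows essentially the same route as the paper: interpret the five collinear points via the four-simplex labelling, apply $\bar{S}$ tetrahedron by tetrahedron, and conclude from the fixed homogeneous coordinates (hence uniqueness of the normalized linear relations, with no gauge to adjust) together with the same arrow/face combinatorics of Fig.~\ref{fig:pentagon-f} already checked for the Veblen map, with direct computation via \eqref{eq:S-bar-yx1}--\eqref{eq:S-bar-yx2} as the fallback the paper also invokes.
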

\begin{proof}
The result can be verified by direct calculation. Notice however that, even having much simpler geometric interpretation then the Veblen map $S^G$, the new map $\bar{S}$ has the same graphic representation (compare Fig.~\ref{fig:Veblen-flip-tetr} with Fig.~\ref{fig:triangle-lin-ABCD}). Therefore the statement follows from consistency of arrows on Fig.~\ref{fig:pentagon-f}.
\end{proof}

\section{Relation to Desargues maps and Hirota's discrete KP equation}
\label{sec:Hir}
In \cite{Dol-Des} it was shown (see also an earlier related work~\cite{KoSchief-Men}) that the non-commutative version of Hirota's discrete KP equation \cite{Hirota,FWN-Capel,Nimmo-NCKP} can be derived from the Veblen configuration. Moreover, its four dimensional compatibility follows from the Desargues theorem. The important observation that the four dimensional consistency of the discrete KP equation in the Schwarzian form has combinatorics of the Desargues configuration is due to Wolfgang Schief, see \cite{Bobenko-talk} and remarks in \cite{Dol-Des,Dol-AN}. In view of the previous Sections it is clear that the incidence geometry structures are the same for both the non-commutative Hirota system and the Veblen map. The goal of this Section is to establish a dictionary between both subjects. The Reader not interested in the theory of integrable discrete equations may go directly to the next part where we discuss the quantum pentagon equation.
\subsection{Desargues maps and the Hirota system}
We collect first some facts on incidence geometric interpretation of the Hirota equation. The Desargues maps, as defined in \cite{Dol-Des}, are maps 
$\phi:\ZZ^N\to\PP^M(\DD)$ of multidimensional integer lattice into 
projective space of dimension $M\geq 2$ over a division ring $\DD$, 
such that for any pair of indices 
$i\ne j$ the points $\phi(n)$, $\phi(n+\boldsymbol{\varepsilon}_i)$ and 
$\phi(n+\boldsymbol{\varepsilon}_j)$ are collinear; here 
$\boldsymbol{\varepsilon}_i = (0, \dots, \stackrel{i}{1}, \dots ,0)$
is the $i$-th element of the canonical basis of $\RR^N$. We will write $F_{(i)}(n)$ instead of $F(n+\boldsymbol{\varepsilon}_i)$ for any function $F$ on $\ZZ^N$. Moreover we will often skip the argument $n$.
\begin{Rem}
The Desargues maps can be defined starting from the root lattice $Q(A_N)$, instead of the $\ZZ^N$ lattice. Such an approach, proposed in \cite{Dol-AN}, makes transparent the $A_N$ affine Weyl group symmetry of Desargues maps and the discrete KP system. Its "local" version is the pentagon property of the Veblen map.
\end{Rem}

In the homogeneous coordinates $\bphi:\ZZ^N\to\DD^{M+1}$  the map can be
described in terms of the linear system
\begin{equation} \label{eq:lin-A}
\bphi + \bphi_{(i)} A_{ij} + \bphi_{(j)} A_{ji} = 0, \qquad i\ne j,
\end{equation}
where $A_{ij}:\ZZ^N\to {\DD}^{\times}$ are certain non-vanishing functions.
The compatibility of the linear system \eqref{eq:lin-A} turns out to be equivalent to
equations 
\begin{align} \label{eq:alg-cond}
& A_{ij}^{-1} A_{ik} + A_{kj}^{-1} A_{ki} = 1, \\
\label{eq:shift-cond}
&A_{ik(j)}A_{jk} = A_{jk(i)} A_{ik},
\end{align}
where the indices $i,j,k$ are distinct. Equations \eqref{eq:alg-cond} and \eqref{eq:shift-cond} are equivalent, in an appropriate gauge given in \cite{Dol-Des}, to the non-commutative Hirota system proposed in \cite{Nimmo-NCKP}.

The precise relation with the Hirota equation is as follows (see \cite{Dol-Des} for details and other gauge forms). Equations \eqref{eq:alg-cond} and \eqref{eq:shift-cond} imply existence of a  non-vanishing function $F:\ZZ^N \to \DD^\times$ satisfying
\begin{equation} \label{eq:gauge-H}
F_{(i)} A_{ij} = - F_{(j)} A_{ji} , \qquad i\neq j.
\end{equation}
After rescaling the homogeneous coordinates as
\begin{equation}
\tilde\bphi = \bphi F^{-1},
\end{equation}
we obtain that
$\tilde\bphi$ satisfies the linear problem \cite{DJM-II,Nimmo-NCKP}
\begin{equation} \label{eq:lin-dKP}
\tilde\bphi_{(i)} - \tilde\bphi_{(j)} =  \tilde\bphi U_{ij},  \qquad i \ne j \leq N,
\end{equation}
with the
coefficients
\begin{equation}
U_{ij} = F A_{ji}^{-1}F_{(j)}^{-1} = - U_{ji}.
\end{equation}
Moreover, equations
\eqref{eq:alg-cond}-\eqref{eq:shift-cond} reduce
to the following systems for distinct triples $i,j,k$
\begin{align} \label{eq:alg-comp-U}
& U_{ij} + U_{jk} + U_{ki} = 0, \\
& \label{eq:U-rho} 
U_{kj}U_{ki(j)} = U_{ki} U_{kj(i)}.
\end{align}
Equations \eqref{eq:U-rho} allow to introduce potentials $r_i:\ZZ^N\to{\DD}^{\times}$ such that
\begin{equation} \label{eq:def-r}
r_{i(j)} = r_i U_{ij}, \qquad i\ne j.
\end{equation}
When $\DD$ is commutative then the functions $r_i$ can 
be parametrized in terms of a 
single potential $\tau$
\begin{equation} \label{eq:r-tau}
r_i = (-1)^{\sum_{k<i}n_k}\frac{\tau_{(i)}}{\tau},
\end{equation}
and equation
\eqref{eq:alg-comp-U} reduces to the Hirota equation 
\begin{equation} \label{eq:H-M}
\tau_{(i)}\tau_{(jk)} - \tau_{(j)}\tau_{(ik)} + \tau_{(k)}\tau_{(ij)} =0,
\qquad 1\leq i< j <k \leq N.
\end{equation}

\subsection{The non-commutative Hirota system, the normalization map, and the Veblen map}
As it was discussed in \cite{Dol-Des} the three dimensional compatibility of the Desargues maps (or equivalently, the compatibility of the linear problem of the non-commutative Hirota equation) can be stated as the Veblen theorem in the form: given four distinct points $\phi_{(j)}$, $\phi_{(k)}$, $\phi_{(ij)}$, $\phi_{(ik)}$; if the lines $\langle \phi_{(j)}\phi_{(k)}\rangle = L$ and $\langle\phi_{(ij)}\phi_{(ik)}\rangle=L_{(i)}$ intersect, then the lines 
$\langle \phi_{(j)}\phi_{(ij)}\rangle = L_{(j)}$ and $\langle \phi_{(k)}\phi_{(ik)}\rangle  =L_{(k)}$ 
intersect as well, see Fig.~\ref{fig:Desargues-Veblen-3d}, in the point $\phi_{(jk)}$.
\begin{figure}
\begin{center}
\includegraphics[width=7cm]{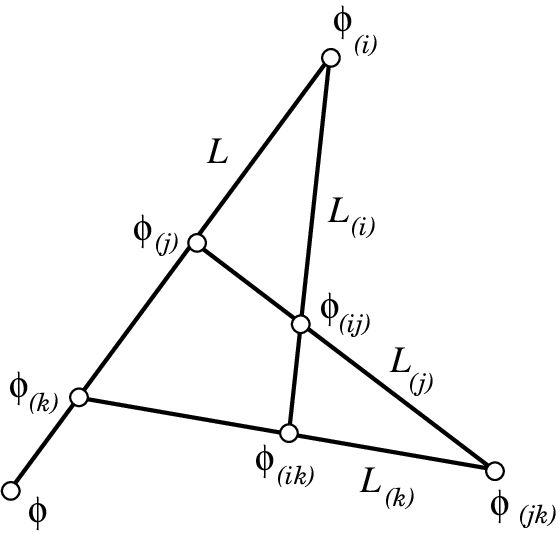}
\end{center}
\caption{Three dimensional compatibility of the Desargues map condition as the Veblen configuration}
\label{fig:Desargues-Veblen-3d}
\end{figure}
The compatibility of the linear problem is the non-commutative Hirota equation. As the geometry of both the Hirota system and the Veblen map is the same, we have only to identify the functions $A_{ij}$ entering into the equation with the coefficients $y$, $x$, $w$ and $u$. Notice the presence of the additional point $\phi$ outside of the Veblen configuration on Fig.~\ref{fig:Desargues-Veblen-3d}, which results in the incorporation of the normalization map into the picture. Also different normalizations of equations for three points on the line imply the presence of the Newton map relations \eqref{eq:Newton-D}.
\begin{Rem}
In the Euclidean geometry approach the relation of the Schwarzian form of the discrete Kadomtsev--Petviashvili equation to the Menelaus theorem, closely related to the Veblen configuration, was given in~\cite{KoSchief-Men}. 
\end{Rem}
\begin{Prop}
The non-commutative Hirota system \eqref{eq:alg-cond}-\eqref{eq:shift-cond} is equivalent to the system composed of the normalization map \eqref{eq:S-bar-yx1}-\eqref{eq:S-bar-yx2} and of the Veblen map \eqref{eq:S-wu1}-\eqref{eq:S-wu2}.
\end{Prop}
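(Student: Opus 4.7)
The starting point is the geometric picture of Fig.~\ref{fig:Desargues-Veblen-3d}: fixing three indices $i,j,k$, the Desargues map condition forces the four points $\phi,\phi_{(i)},\phi_{(j)},\phi_{(k)}$ to be collinear, which is exactly the setting of the normalization map of Section~\ref{sec:norm}; at the same time the six points $\phi_{(i)},\phi_{(j)},\phi_{(k)},\phi_{(ij)},\phi_{(ik)},\phi_{(jk)}$ on the four lines $L,L_{(i)},L_{(j)},L_{(k)}$ form the Veblen configuration underlying the Veblen flip of Section~\ref{sec:alg}. The plan is to show that, with a suitable identification of coefficients, the algebraic condition \eqref{eq:alg-cond} is the normalization map while the shift condition \eqref{eq:shift-cond} is the Veblen map.

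First, at a fixed site $n$, write \eqref{eq:lin-A} for the three pairs $(i,j),(i,k),(j,k)$; these are three linear relations normalized at $\bphi$ on the three triangular faces of the three-simplex with vertices $\{\phi,\phi_{(i)},\phi_{(j)},\phi_{(k)}\}$. Using the arrow convention of Fig.~\ref{fig:triangle-lin-ABCD} and, where necessary, the Newton map \eqref{eq:Newton-D} to switch between equivalent normalizations on a fixed line, I would identify the six coefficients $A_{ij},A_{ji},A_{ik},A_{ki},A_{jk},A_{kj}$ with $(y_1,x_1,y_2,x_2,\bar y_2,\bar x_2)$ of the normalization map applied to the four collinear points on $L$. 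Under this identification the compact form \eqref{eq:S-Hir-alg}, namely $\bar y_2^{-1}y_2-y_1^{-1}x_2=1$ and $\bar x_2^{-1}x_1-x_2^{-1}y_1=1$, becomes literally the Hirota algebraic constraint \eqref{eq:alg-cond}; the remaining components $\bar y_1,\bar x_1$ describe the fourth triangular face $\phi_{(i)}\phi_{(j)}\phi_{(k)}$ and are recovered from the same three relations.

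Second, take the two incoming lines $L$ and $L_{(i)}$ and identify the Veblen-map inputs $(w_1,u_1),(w_2,u_2)$ with the coefficients of the normalized linear relations on them, obtained respectively from combining \eqref{eq:lin-A} at $n$ and from \eqref{eq:lin-A} for the pair $(j,k)$ at $n+\boldsymbol{\varepsilon}_i$. The outgoing coefficients $(w_1',u_1'),(w_2',u_2')$ live on $L_{(j)},L_{(k)}$ and therefore involve the shifted $A_{\cdot\cdot(j)},A_{\cdot\cdot(k)}$. The calculation is then formally identical to the derivation leading to \eqref{eq:S-wu1}-\eqref{eq:S-wu2} in Section~\ref{sec:V-pent}: equating the two expressions for $\bphi_{(jk)}$ computed respectively on $L_{(j)}$ and on $L_{(k)}$ yields, upon substituting the identifications, exactly the shift relation \eqref{eq:shift-cond} $A_{ik(j)}A_{jk}=A_{jk(i)}A_{ik}$. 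The gauge parameter $G$ absorbs precisely the rescaling freedom $\bphi\to\bphi F^{-1}$ of \eqref{eq:gauge-H} and so is not a genuine extra degree of freedom.

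The converse direction is the same computation run backwards: starting from a joint solution of the normalization and Veblen maps, one reads off all $A_{ij}$ and verifies \eqref{eq:alg-cond}-\eqref{eq:shift-cond} by the same identities. The main obstacle is purely notational --- the arrow conventions of Figs.~\ref{fig:Veblen-flip-tetr} and \ref{fig:triangle-lin-ABCD} differ, and matching them consistently across the four lines of Fig.~\ref{fig:Desargues-Veblen-3d} requires a careful but routine application of the Newton map on each line. No new algebraic content beyond what is already in Sections~\ref{sec:alg} and~\ref{sec:norm} is needed; the proposition is essentially a dictionary.
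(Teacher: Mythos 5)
Your overall route is the same as the paper's: identify the first part \eqref{eq:alg-cond} of the Hirota system with the second pair \eqref{eq:S-bar-yx2} of the normalization map via the rewriting \eqref{eq:S-Hir-alg}, use \eqref{eq:S-bar-yx1} to generate the coefficients $w_1,u_1$ of the linear relation on the line through $\phi_{(i)},\phi_{(j)},\phi_{(k)}$, and then read the Veblen map off Fig.~\ref{fig:Desargues-Veblen-3d} as the shift part \eqref{eq:shift-cond}, with the Newton map \eqref{eq:Newton-D} reconciling the two normalization conventions; the converse is indeed just the dictionary run backwards.

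Two points in your sketch are, however, under- or mis-stated. First, the equivalence requires that the four output relations \eqref{eq:S-wu1}--\eqref{eq:S-wu2} reproduce \emph{all three} index-permuted instances of \eqref{eq:shift-cond} for the triple $i,j,k$, not only the single relation $A_{ik(j)}A_{jk}=A_{jk(i)}A_{ik}$ that you name: in the paper the second equation of \eqref{eq:S-wu1} gives that instance, the first equation of \eqref{eq:S-wu2} gives the instance with $i$ and $k$ exchanged, while the second equation of \eqref{eq:S-wu2} becomes the instance with $j$ and $k$ exchanged only after invoking the identity $A_{ij}^{-1}A_{ik}A_{jk}^{-1}+A_{kj}^{-1}A_{ki}A_{ji}^{-1}=0$, itself a consequence of \eqref{eq:alg-cond}; so the two halves of the dictionary are not independent, and your claim that ``no new algebraic content'' is needed glosses over exactly this step. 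Second, the gauge $G$ is not the rescaling $F$ of \eqref{eq:gauge-H} (that gauge only serves to pass to the $U$-form of the linear problem); since all homogeneous coordinates are fixed in the Hirota picture, the remaining Veblen relation (the first equation of \eqref{eq:S-wu1}) simply \emph{determines} $G=A_{ki(j)}A_{jk}A_{ik}^{-1}$ as in \eqref{eq:G-H}, rather than absorbing an independent freedom. With these corrections your outline coincides with the paper's proof.
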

\begin{proof}
The first part \eqref{eq:alg-cond} of the non-commutative Hirota is composed of three equations for different permutations of indices $i,j,k$. As it was shown in (Corollary 3.2 in \cite{Dol-Des}) any two of them imply the third one. We first show that equations  \eqref{eq:alg-cond} can be identified with equations \eqref{eq:S-bar-yx2} which form the second pair of the normalization map $\bar{S}$.

In the linear problem \eqref{eq:lin-A} identify $\bphi$ with $\bphi_A$ of equation \eqref{eq:lin-S-bar-12}, and $\bphi_{(i)}$ with $\bphi_B$, then $\bphi_{(k)}$ with $\bphi_C$, and $\bphi_{(j)}$ with $\bphi_D$. This gives the following identification of the coefficients:
\begin{equation*}
y_1 = - A_{ki}, \quad x_1 = A_{ik}, \quad y_2 = - A_{jk}, \quad x_2 = A_{kj}, \quad \bar{y}_2 = - A_{ji}, \quad \bar{x}_2 = A_{ij}.
\end{equation*}
Then equation \eqref{eq:alg-cond} of the non-commutative Hirota system is identified  with the second of equations \eqref{eq:S-Hir-alg}, while the first of equations \eqref{eq:S-Hir-alg} is \eqref{eq:alg-cond} with $i$ and $j$ reversed. The third equation of \eqref{eq:alg-cond} follows from the first two, as explained in \cite{Dol-Des}.
Equations \eqref{eq:S-bar-yx1} describe coefficients $\bar{y}_1$ and $\bar{x}_1$ of the linear equation involving $\bphi_{(i)}$, $\bphi_{(j)}$, and $\bphi_{(k)}$, which will be needed in the sequel.

In order to find the precise relation of the Veblen map to (three) equations of the second part \eqref{eq:shift-cond} of the Hirota system we first label the lines and points on Fig.~\ref{fig:Desargues-Veblen-3d} in a way described in Section~\ref{sec:V-pent}. By identifying the lines $L$ and $L_{(i)}$ with lines $1$ and $2$, correspondingly, and lines $L_{(j)}$ and $L_{(k)}$ with $1^\prime$ and $2^\prime$ we identify: $\phi_{(i)}$ with $\phi_{AC}$, $\phi_{(j)}$ with $\phi_{BC}$, $\phi_{(k)}$ with $\phi_{AB}$, $\phi_{(ij)}$ with $\phi_{CD}$, $\phi_{(ik)}$ with $\phi_{AD}$, and $\phi_{(jk)}$ with $\phi_{BF}$. This gives the following pairings:
\begin{align*}
w_2 & = - A_{jk(i)}^{-1},&  u_2 & = A_{kj(i)}A_{jk(i)}^{-1}, \\
w_1^\prime & = - A_{ki(j)}, & u_1^\prime & = A_{ik(j)}, \\
w_2^\prime & = - A_{ji(k)}^{-1}, & u_2^\prime &= A_{ij(k)} A_{ji(k)}^{-1},
\end{align*}
where the Newton map relations \eqref{eq:Newton-D} have been used as well.
The coefficients $w_1$ and $u_1$, which enter the linear equation involving $\bphi_{(i)}$, $\bphi_{(j)}$, and $\bphi_{(k)}$, can be obtained from the relation of the normalization map and the first part \eqref{eq:alg-cond} of the Hirota system, as described above. Due to equations \eqref{eq:S-bar-yx1} we have
\begin{equation*}
w_1 = \bar{y}_1 = (A_{ki} - A_{kj}) A_{jk}^{-1}, \qquad u_1 = \bar{x}_{1} = - A_{ik}A_{jk}^{-1}.
\end{equation*}

Then the second of equations \eqref{eq:S-wu1} is equation \eqref{eq:shift-cond} with the same indices $i,j,k$, while the first one defines the gauge coefficient
\begin{equation} \label{eq:G-H}
G = A_{ki(j)} A_{jk} A_{ik}^{-1}.
\end{equation}
The first equation of \eqref{eq:S-wu2} turns out to  be equation \eqref{eq:shift-cond} with indices $i$ and $k$ exchanged. Finally, in order to check that the second equation of \eqref{eq:S-wu2} is  equation \eqref{eq:shift-cond} with indices $j$ and $k$ exchanged (the last possibility), it is convenient to use the following consequence of equations \eqref{eq:alg-cond} proved in \cite{Dol-Des}
\begin{equation*}
A_{ij}^{-1} A_{ik} A_{jk}^{-1} + A_{kj}^{-1} A_{ki} A_{ji}^{-1} = 0.
\end{equation*}
\end{proof}

\section{The quantum Veblen map}
\label{sec:quant}
In this Section we discuss an example of the transition from solutions of the functional pentagon equation to its quantum version. This will be motivated by the quantization of the commutative (the division ring $\DD$ is replaced by the field $\CC$ of complex numbers) Veblen map. We show that, under appropriate choice of function $G$, the Veblen map $S^G$ preserves a natural Poisson structure, which is a limit of the Weyl commutation relations. After showing that the Veblen map preserves the commutation relations we construct a corresponding solution of the quantum pentagon equation. Analogous results for Zamolodchikov's tetrahedron equation~\cite{Zamolodchikov} in relation to the four dimensional consistency of the quadrilateral lattices~\cite{MQL} and discrete Darboux equations~\cite{BoKo} were discussed in~\cite{BaMaSe,Sergeev-q3w}.

Notice that because our evolution rule given by the Veblen map is valid for every division ring, from that point of view integrable quantization can be understood as an integrable reduction, where we specify a particular division ring as a ring of fractions of an algebra, and we demand the evolution preserves the structure of the algebra. This approach resolves also the problem of ordering of non-commutative factors, often present in the quantization procedures. Such a point of view is analogous to the idea of integrable geometric reductions of the quadrilateral lattices, as initiated in~\cite{CDS}. 

\subsection{The Veblen map as a Poisson map}
In this Section we investigate the Veblen map in the  commutative case, when the division ring $\DD$ is the field  $\CC$ of complex numbers. We will require however that the map is Poisson with respect to the brackets
\begin{equation} \label{eq:Poisson-b}
\{ w_i, u_j \} = \delta_{ij} w_i u_j, \qquad \{ w_i, w_j \} = \{ u_i, u_j \} = 0,
\qquad i,j=1,2.  
\end{equation}
This can be achieved by a suitable choice of the gauge parameter/function $G$ in formulas \eqref{eq:S-wu1}-\eqref{eq:S-wu2} of the map $S$.
\begin{Rem}
In other words we are looking for an automorphism of the Poisson algebra $\CC(w_1,u_1, w_2, u_2)$ of rational functions of four variables with the bracket defined by \eqref{eq:Poisson-b}.
\end{Rem}
\begin{Prop} \label{prop:S-G-P} 
If the gauge function $G:\CC^2\times\CC^2\dashrightarrow \CC^\times$ in the birational map $S^G:\CC^2\times\CC^2\dashrightarrow \CC^2\times\CC^2$ is of the form
\begin{equation} \label{eq:G-P}
G(w_1,u_1,w_2,u_2) = (\alpha u_2 + \beta w_1)w_2^{-1} u_1^{-1}, \qquad \CC^2 \ni (\alpha,\beta) \neq (0,0),
\end{equation}
then the map $S^G$ preserves the Poisson bracket \eqref{eq:Poisson-b}.
\end{Prop}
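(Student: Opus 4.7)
The plan is to verify Proposition~\ref{prop:S-G-P} by direct bracket computation in logarithmic coordinates, where the Poisson structure \eqref{eq:Poisson-b} becomes the constant symplectic form $\{W_i,U_j\}=\delta_{ij}$ with $W_i=\log w_i$, $U_i=\log u_i$ (all other brackets vanishing).

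First, I would substitute $G=(\alpha u_2+\beta w_1)w_2^{-1}u_1^{-1}$ into \eqref{eq:S-wu1}-\eqref{eq:S-wu2} and abbreviate $\Phi=\alpha u_2+\beta w_1$; the image variables then simplify to $w_1'=\Phi/w_2$, $u_1'=u_1/w_2$, $w_2'=w_1 w_2/\Phi$, $u_2'=u_1 u_2/\Phi$. Passing to logarithms and setting $\varphi=\log\Phi$, the four image coordinates become the \emph{linear} combinations
\begin{equation*}
W_1'=\varphi-W_2,\quad U_1'=U_1-W_2,\quad W_2'=W_1+W_2-\varphi,\quad U_2'=U_1+U_2-\varphi,
\end{equation*}
reducing the whole verification to bilinear bookkeeping.

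The substantive step is computing the brackets of $\varphi$ with the four generators. Since $\Phi$ depends only on $u_2$ and $w_1$, the Leibniz rule yields immediately $\{\varphi,W_1\}=\{\varphi,U_2\}=0$ together with
\begin{equation*}
\{\varphi,U_1\}=\frac{\beta w_1}{\Phi},\qquad \{\varphi,W_2\}=-\frac{\alpha u_2}{\Phi}.
\end{equation*}
The key structural observation is then the tautology $\beta w_1/\Phi+\alpha u_2/\Phi=1$. This identity is precisely the reason the gauge must take the linear form \eqref{eq:G-P}: any denominator other than $\Phi$ (or any non-affine $\Phi$) would destroy the cancellation that produces the required $\delta_{ij}$ in every output bracket.

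Finally, I would expand each of the six brackets $\{W_1',U_1'\}$, $\{W_1',U_2'\}$, $\{U_1',W_2'\}$, $\{W_2',U_2'\}$, $\{W_1',W_2'\}$, $\{U_1',U_2'\}$ bilinearly; every expansion is a short combination of the constants $\delta_{ij}$ and of the four $\varphi$-brackets above, and collapses to either $1$ or $0$ via the identity $\beta w_1/\Phi+\alpha u_2/\Phi=1$. The main obstacle is purely clerical---managing the nine cross-terms in $\{W_2',U_2'\}$ and the signs coming from $\{u_2,w_2\}=-w_2u_2$---since no structural subtlety remains once $\varphi$ has been isolated. The argument is uniform in $(\alpha,\beta)\neq(0,0)$, which concludes the proof.
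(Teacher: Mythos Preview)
Your verification is correct: passing to logarithmic coordinates, isolating $\varphi=\log\Phi$, and using $\beta w_1/\Phi+\alpha u_2/\Phi=1$ reduces all six output brackets to $0$ or $1$ exactly as you describe. This is a clean and self-contained proof of the proposition as stated.

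Your approach differs from the paper's. Rather than verifying the six brackets directly, the paper sets $F=Gu_1w_2$ and shows that the Poisson-map requirement is \emph{equivalent} to the conditions $\partial F/\partial u_1=\partial F/\partial w_2=0$ together with the degree-one homogeneity $w_1\,\partial F/\partial w_1+u_2\,\partial F/\partial u_2=F$; the form $F=\alpha u_2+\beta w_1$ then visibly satisfies these. The trade-off is this: your method is a concrete check of the specific gauge \eqref{eq:G-P} and makes the role of the identity $a+b=1$ completely transparent, whereas the paper's method is a characterization---it tells you \emph{which} functions $F$ (hence $G$) make $S^G$ Poisson, and immediately yields the more general family $F(w_1,u_2)=f(w_1u_2^{-1})u_2$ noted in the Remark following the proposition. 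In particular, your side comment that ``any non-affine $\Phi$ would destroy the cancellation'' is too strong: the paper's PDE shows that any $F$ depending only on $w_1,u_2$ and homogeneous of degree one works, not just the linear ones. This does not affect the correctness of your verification of the stated proposition, only the heuristic you offered for why the form \eqref{eq:G-P} is forced.
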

\begin{proof}
The result can be checked by direct calculation, which is however simpler when we 
make the ansatz $F(w_1,u_1,w_2,u_2) =G(w_1,u_1,w_2,u_2)u_1  w_2 $. Then the Poisson map condition gives relations
\begin{equation}
\frac{\partial F}{\partial u_1} = \frac{\partial F}{\partial w_2} = 0,
\end{equation}
and the homogeneity of degree one condition
\begin{equation}
w_1 \frac{\partial F}{\partial w_1} + u_2 \frac{\partial F}{\partial u_2} = F.
\end{equation}
\end{proof}
\begin{Cor} \label{cor:inv-P}
The inverse map \eqref{eq:S-inv-wu1}-\eqref{eq:S-inv-wu2} in the Poisson case with $G$ given by \eqref{eq:G-P} takes the form
\begin{align} \label{eq:S-P-inv-wu1}
w_1 & = w_2^\prime w_1^\prime , &  
u_1 & = \alpha u_2^\prime + \beta w_2^\prime u_1^\prime, \\
\label{eq:S-P-inv-wu2}
w_2 & = \alpha u_2^\prime (u_1^\prime)^{-1} + \beta w_2^\prime , & 
u_2 &= u_2^\prime w_1^\prime  (u_1^\prime)^{-1}.
\end{align}
The corresponding function $G^\prime$ reads
\begin{equation}
G^\prime (w_1^\prime, u_1^\prime, w_2^\prime, u_2^\prime) =
\frac{w_1^\prime}{\alpha u_2^\prime + \beta u_1^\prime w_2^\prime},
\end{equation}
and satisfies the functional relation 
\eqref{eq:G-prime}.
\end{Cor}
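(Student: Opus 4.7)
The plan is to work entirely in the commutative setting and exploit the generic inverse formulas \eqref{eq:S-inv-wu1}-\eqref{eq:S-inv-wu2}, which already give $w_1$ and $u_1$ (and consequently $w_2,u_2$) in terms of $w_1',u_1',w_2',u_2'$ once the correct function $G'(w_1',u_1',w_2',u_2')$ is inserted in place of the formal symbol $G^{-1}$. Thus the whole task reduces to identifying $G'$ explicitly for the Poisson gauge \eqref{eq:G-P}, and then to verifying the functional relation \eqref{eq:G-prime}.

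To obtain $G'$, I would rewrite the definition \eqref{eq:G-P} in the cleared form $G\,u_1 w_2=\alpha u_2+\beta w_1$ and express the right-hand side in primed coordinates using the forward map. From \eqref{eq:S-wu1}-\eqref{eq:S-wu2} one reads $w_1=w_2' G u_1=w_2' w_1'$ and $u_2=u_2' G w_2=u_2' w_1' (u_1')^{-1}$, whence
\begin{equation*}
\alpha u_2+\beta w_1=w_1'\bigl(\alpha u_2'(u_1')^{-1}+\beta w_2'\bigr)=\frac{w_1'\bigl(\alpha u_2'+\beta u_1' w_2'\bigr)}{u_1'}.
\end{equation*}
On the other hand $G u_1=w_1'$ and $w_2=u_1/u_1'$, so $G u_1 w_2=w_1' u_1/u_1'$. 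Equating the two expressions cancels $w_1'/u_1'$ and leaves the closed formula
\begin{equation*}
u_1=\alpha u_2'+\beta u_1' w_2',
\end{equation*}
from which $G'=G=w_1'/u_1=w_1'/(\alpha u_2'+\beta u_1' w_2')$ as claimed. Substituting the resulting $G'^{-1}=(\alpha u_2'+\beta u_1' w_2')/w_1'$ into \eqref{eq:S-inv-wu1}-\eqref{eq:S-inv-wu2} and simplifying gives \eqref{eq:S-P-inv-wu1}-\eqref{eq:S-P-inv-wu2}: the identities $u_1=\alpha u_2'+\beta w_2' u_1'$ and $w_2=\alpha u_2'(u_1')^{-1}+\beta w_2'$ are immediate, while $w_1=w_2' w_1'$ and $u_2=u_2' w_1'(u_1')^{-1}$ carry over unchanged from the generic inverse.

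The functional relation \eqref{eq:G-prime} is then a short check: replacing $w_1',u_1',w_2',u_2'$ in $G'$ by their forward-map expressions \eqref{eq:S-wu1}-\eqref{eq:S-wu2} yields
\begin{equation*}
\alpha u_2'+\beta u_1' w_2'=\frac{\alpha u_2+\beta w_1}{w_2 G},\qquad w_1'=Gu_1,
\end{equation*}
so that $G'\circ S^G=G u_1 w_2 G/(\alpha u_2+\beta w_1)=G$ by the very definition of $G$. The only genuinely substantive step is the elimination of $u_1$ carried out above; it succeeds precisely because the Poisson ansatz of Proposition~\ref{prop:S-G-P} makes $Gu_1w_2$ a linear combination of $u_2$ and $w_1$, i.e. of the two coordinates whose images under $S^G$ involve $G$ only through the overall factor $w_1'$. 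No deeper obstacle is expected.
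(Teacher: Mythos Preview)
Your argument is correct and is exactly the direct substitution the paper tacitly invokes: the corollary is stated without proof, as an immediate consequence of inserting the Poisson gauge \eqref{eq:G-P} into the generic inverse \eqref{eq:S-inv-wu1}--\eqref{eq:S-inv-wu2} and eliminating $G$ in favour of the primed variables. Your observation that $w_1$ and $u_2$ come for free from the generic inverse (being $G$-independent) and that the whole content lies in solving $Gu_1w_2=\alpha u_2+\beta w_1$ for $u_1$ in primed variables is precisely the intended shortcut; nothing is missing.
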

\begin{Rem}
From the proof it follows that the function $F$ can be of much more general form
\begin{equation*}
F(w_1,u_2) = f(w_1 u_2^{-1}) u_2 ,
\end{equation*}
with arbitrary function $f$.
\end{Rem}

In order to formulate the Poisson Veblen map analogue of Theorem~\ref{th:pentagon-S} we should translate the condition \eqref{eq:UV-XYZ}-\eqref{eq:UV-XYZ-funct} to conditions on the parameters $\alpha$ and $\beta$. The following result can be checked by direct calculation.
\begin{Lem}
The condition \eqref{eq:UV-XYZ}-\eqref{eq:UV-XYZ-funct} in the case of the gauge functions of the form given by equation \eqref{eq:G-P} is equivalent to the constraints
\begin{equation} \label{eq:relation-a-b-P}
\alpha_Y = \alpha_U \alpha_V, \quad  \alpha_X = \alpha_V \beta_Z, \quad
\alpha_Z = \alpha_U \beta_X,
\quad \beta_U = \beta_Y\beta_Z, \quad \beta_V = \beta_X\beta_Y.
\end{equation}
\end{Lem}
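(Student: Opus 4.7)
My plan is to verify both identities in \eqref{eq:UV-XYZ} by direct substitution and coefficient matching. First I would write each of the gauge functions $U, V, X, Y, Z$ explicitly as a rational function of the base variables $w_1, u_1, w_2, u_2, w_3, u_3$ using the ansatz \eqref{eq:G-P}. For $U(w_1,u_1,w_2,u_2)$ and $X(w_2,u_2,w_3,u_3)$ this is immediate; for $V$, $Y$, $Z$ the arguments are the outputs of the preceding factors on the appropriate side of \eqref{eq:pentagon-funct-dynam}, as tabulated in the proof of Theorem~\ref{th:pentagon-S}, so one substitutes those outputs into \eqref{eq:G-P} and then simplifies. The intermediate identity $u_1 U = (\alpha_U u_2 + \beta_U w_1)/w_2$ (and its analogue for $X$) keeps the bookkeeping tractable.

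Next I would substitute the resulting explicit expressions into $V = Y w_2 u_2^{-1}$ and $U = Z X u_2 w_2^{-1}$, clear the common denominators, and match coefficients of the distinct monomials in the six algebraically independent variables $w_i, u_j$. The first identity yields the three relations $\alpha_Y = \alpha_U\alpha_V$, $\beta_V = \beta_X\beta_Y$, and $\alpha_V\beta_U = \alpha_X\beta_Y$, while the second yields $\alpha_Z = \alpha_U\beta_X$, $\beta_U = \beta_Y\beta_Z$, and $\alpha_U\alpha_X = \alpha_Y\beta_Z$. Four of the five relations in \eqref{eq:relation-a-b-P} appear directly in this list; the remaining one $\alpha_X = \alpha_V\beta_Z$ follows by combining $\alpha_V\beta_U = \alpha_X\beta_Y$ with $\beta_U = \beta_Y\beta_Z$ (in the generic case $\beta_Y \neq 0$), and the extra equation $\alpha_U\alpha_X = \alpha_Y\beta_Z$ is then a consequence of $\alpha_Y = \alpha_U\alpha_V$ and $\alpha_X = \alpha_V\beta_Z$. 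The reverse implication is obtained by reading the same computation backwards.

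The main obstacle is really just the bookkeeping needed to write $Y$ and $Z$ out in closed form, since their arguments already contain $X$ (and, for $Z$, also $Y$); one has to chase these substitutions carefully to expose the monomial structure of the numerators. After that step the coefficient comparison is routine because the commutative setting cleanly separates the monomials $u_2 u_3$, $u_2 w_2$, $w_1 u_3$, and $w_1 w_2$ which arise in the resulting polynomial identities, and no further manipulation is required.
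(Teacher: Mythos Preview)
Your proposal is correct and follows exactly the route the paper indicates: the paper's own proof of this lemma is the single sentence ``The following result can be checked by direct calculation,'' and what you outline is precisely that calculation, carried out in full. Your bookkeeping (substituting the outputs listed in the proof of Theorem~\ref{th:pentagon-S} into the ansatz \eqref{eq:G-P}, clearing denominators, and matching coefficients of the independent monomials) is the expected verification, and your treatment of the two auxiliary relations---deriving $\alpha_X=\alpha_V\beta_Z$ from $\alpha_V\beta_U=\alpha_X\beta_Y$ and $\beta_U=\beta_Y\beta_Z$, and observing that $\alpha_U\alpha_X=\alpha_Y\beta_Z$ is then redundant---is exactly what is needed to reconcile the six coefficient equations with the five stated constraints.
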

\begin{Rem}
As the set of independent parameters one can take $\alpha_U, \alpha_V$ and $\beta_X, \beta_Y, \beta_Z$.
\end{Rem}
\begin{Prop}
\label{prop:pentagon-S-P}
The  Poisson Veblen map $S(\alpha,\beta):\CC^2\times\CC^2 \dashrightarrow \CC^2\times\CC^2$ given by 
\begin{align} \label{eq:S-P-wu1}
w_1^\prime & = (\alpha u_2 + \beta w_1)w_2^{-1}   , & 
u_1^\prime & = w_2^{-1} u_1, \\
\label{eq:S-P-wu2}
w_2^\prime & = w_1 w_2(\alpha u_2 + \beta w_1)^{-1}, & 
u_2^\prime &= u_2 u_1(\alpha u_2 + \beta w_1)^{-1},
\end{align}
satisfies the functional pentagon relation
\begin{equation} \label{eq:pentagon-funct-dynam-P}
S_{12}(\alpha_U \beta_X, \beta_Z) \circ S_{13}(\alpha_U \alpha_V, \beta_Y)  \circ S_{23}(\alpha_V \beta_Z, \beta_X)  = S_{23}(\alpha_V, \beta_X \beta_Y)  \circ S_{12}(\alpha_U, \beta_Y \beta_Z) .
\end{equation}
\end{Prop}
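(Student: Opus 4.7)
The plan is to deduce this Proposition directly from Theorem~\ref{th:pentagon-S} combined with the preceding Lemma. By Proposition~\ref{prop:S-G-P}, the Poisson Veblen map $S(\alpha,\beta)$ is just the general Veblen map $S^G$ with gauge of the form \eqref{eq:G-P}, so I would invoke Theorem~\ref{th:pentagon-S}, which guarantees the functional dynamical pentagon relation $S_{12}^Z \circ S_{13}^Y \circ S_{23}^X = S_{23}^V \circ S_{12}^U$ whenever the consistency conditions relating $U, V$ to $X, Y, Z$ are satisfied. My task then reduces to exhibiting a parametrization $(\alpha_\star, \beta_\star)$ of each of the five factors so that these consistency conditions hold and the resulting data coincide with what appears in \eqref{eq:pentagon-funct-dynam-P}.

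Next, I would apply the Lemma, which recasts the general consistency of Theorem~\ref{th:pentagon-S} for Poisson-type gauges as the purely algebraic system \eqref{eq:relation-a-b-P}. Taking $\alpha_U, \alpha_V, \beta_X, \beta_Y, \beta_Z$ as the five independent parameters and solving \eqref{eq:relation-a-b-P} for the remaining variables $\alpha_X, \alpha_Y, \alpha_Z, \beta_U, \beta_V$, I would verify that these solved values reproduce exactly the parameter assignments in \eqref{eq:pentagon-funct-dynam-P}: on the left-hand side $S_{23}^X$ receives $(\alpha_V\beta_Z, \beta_X)$, $S_{13}^Y$ receives $(\alpha_U\alpha_V, \beta_Y)$, $S_{12}^Z$ receives $(\alpha_U\beta_X, \beta_Z)$, and on the right-hand side $S_{12}^U$ is $(\alpha_U, \beta_Y\beta_Z)$ and $S_{23}^V$ is $(\alpha_V, \beta_X\beta_Y)$. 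With this choice, all five equations of \eqref{eq:relation-a-b-P} hold tautologically, and \eqref{eq:pentagon-funct-dynam-P} follows from Theorem~\ref{th:pentagon-S}.

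The real obstacle is concealed in the Lemma rather than in the Proposition itself. To prove the Lemma one would substitute the Poisson ansatz \eqref{eq:G-P} into the consistency relations of Theorem~\ref{th:pentagon-S}, keeping in mind that $U, X$ depend on pre-composition variables while $V, Y, Z$ are evaluated at the intermediate post-transformation variables computed in the proof of that theorem; one would then match the $\alpha$- and $\beta$-coefficients of the numerator $\alpha u + \beta w$ in each gauge. This decomposes the multiplicative consistency into the five separate multiplicative relations \eqref{eq:relation-a-b-P}. It is a direct but bookkeeping-heavy computation; granted it, the Proposition reduces to the inspection of parameters described above.
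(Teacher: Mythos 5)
Your proposal is correct and follows essentially the same route as the paper: the paper states Proposition~\ref{prop:pentagon-S-P} without a separate proof precisely because it is the immediate combination of Theorem~\ref{th:pentagon-S} with the preceding Lemma, taking $\alpha_U,\alpha_V,\beta_X,\beta_Y,\beta_Z$ as the independent parameters (as in the Remark) and reading off $\alpha_X=\alpha_V\beta_Z$, $\alpha_Y=\alpha_U\alpha_V$, $\alpha_Z=\alpha_U\beta_X$, $\beta_U=\beta_Y\beta_Z$, $\beta_V=\beta_X\beta_Y$ from \eqref{eq:relation-a-b-P}, exactly as you do. Your added remark that the substantive computation is hidden in the Lemma (matching the $\alpha$- and $\beta$-coefficients with the intermediate variables inserted into $V,Y,Z$) is consistent with the paper, which likewise disposes of the Lemma by direct calculation.
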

\begin{Ex}
The solution $\alpha=0$, $\beta=1$ of equations \eqref{eq:relation-a-b-P} was discussed in the general non-commutative situation in Examples~\ref{ex:gauge} and~\ref{ex:gauge-2}.
\end{Ex}
\begin{Cor} \label{cor:locality}
Formulas \eqref{eq:S-P-wu1}-\eqref{eq:S-P-wu2} and \eqref{eq:S-P-inv-wu1}-\eqref{eq:S-P-inv-wu1} which define the map $S(\alpha,\beta)$ and its inverse have been obtained from general formulas \eqref{eq:S-wu1}-\eqref{eq:S-wu2}, \eqref{eq:S-inv-wu1}-\eqref{eq:S-inv-wu2} for $S^G$ and its inverse assuming only the form \eqref{eq:G-P} of $G$ (with $\alpha$ and $\beta$ in the center of $\DD$) and the ultra-locality, i.e. variables with different indices commute. 
\end{Cor}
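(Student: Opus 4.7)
The plan is to re-run the derivations of Proposition~\ref{prop:S-G-P} and Corollary~\ref{cor:inv-P} inside a general division ring $\DD$, tracking exactly where the centrality of $\alpha,\beta$ and the ultra-locality hypothesis enter, and to check that these two inputs suffice for every manipulation involved.

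For the forward direction I would substitute $G=(\alpha u_2+\beta w_1)w_2^{-1}u_1^{-1}$ and $G^{-1}=u_1w_2(\alpha u_2+\beta w_1)^{-1}$ directly into \eqref{eq:S-wu1}--\eqref{eq:S-wu2}. The formulas $w_1^\prime=Gu_1$ and $w_2^\prime=w_1u_1^{-1}G^{-1}$ reduce to \eqref{eq:S-P-wu1}--\eqref{eq:S-P-wu2} by the single cancellation $u_1^{-1}u_1=1$, with no commutation required. The expression $u_1^\prime=w_2^{-1}u_1$ is left unchanged. Only the last one, $u_2^\prime=u_2w_2^{-1}G^{-1}=u_2w_2^{-1}u_1w_2(\alpha u_2+\beta w_1)^{-1}$, demands a rearrangement, and the sole ingredient is the ultra-local relation $u_1w_2^{-1}=w_2^{-1}u_1$, which collapses the middle factor to give $u_2u_1(\alpha u_2+\beta w_1)^{-1}$. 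Centrality of $\alpha,\beta$ enters only in allowing the combination $\alpha u_2+\beta w_1$ to be treated as a single element and inverted unambiguously.

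For the inverse direction I would first solve the functional relation \eqref{eq:G-prime} for $G$ in terms of the primed variables: substituting the already-derived \eqref{eq:S-P-wu1}--\eqref{eq:S-P-wu2} into the right-hand side of \eqref{eq:G-P} and using a single ultra-local commutation between primed indices yields $G=w_1^\prime(\alpha u_2^\prime+\beta u_1^\prime w_2^\prime)^{-1}$. Inserting $G^{-1}=(\alpha u_2^\prime+\beta u_1^\prime w_2^\prime)(w_1^\prime)^{-1}$ into \eqref{eq:S-inv-wu1}--\eqref{eq:S-inv-wu2} then produces the formulas for $w_1$ and $u_2$ automatically, while those for $u_1$ and $w_2$ match \eqref{eq:S-P-inv-wu1}--\eqref{eq:S-P-inv-wu2} after one further use of $u_1^\prime w_2^\prime=w_2^\prime u_1^\prime$ in each.

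The only conceptual point worth flagging is why ultra-locality may legitimately be invoked on the primed side as well: this is part of the hypothesis, understood as a structural property of the underlying algebra on which the map acts (and which the Weyl/Poisson setup of the section is designed to make consistent). Granting it, every step is a short associative-algebra manipulation using only cancellation of inverses, centrality of $\alpha,\beta$, and commutation across the $1$--$2$ index divide; no hidden use of commutativity among variables sharing an index ever appears, which is precisely the content of the corollary.
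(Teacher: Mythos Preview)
Your verification is correct and matches what the paper intends. In the paper this corollary is stated without proof as an observation following the Poisson computation, so there is no separate argument to compare against; the direct substitution you carry out, tracking the one genuine ultra-local swap in $u_2^\prime$ and the analogous swaps on the primed side for the inverse, is exactly the implicit content of the statement. Your flag about primed ultra-locality is also apt: from unprimed ultra-locality alone one gets $u_1=\alpha u_2^\prime+\beta u_1^\prime w_2^\prime$, and the ordering $w_2^\prime u_1^\prime$ in \eqref{eq:S-P-inv-wu1} then uses the primed commutation, which is part of the standing hypothesis in the Weyl setting of the section.
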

\begin{Cor} \label{cor:norm-P}
Under the ultra-locality assumption the normalization map $\bar{S}$ given by equations \eqref{eq:S-bar-yx1}-\eqref{eq:S-bar-yx2}, with $w,u$ replaced by $y,x$, coincides with the local Veblen map \eqref{eq:S-P-wu1}-\eqref{eq:S-P-wu2} with parameters $\alpha=\beta=1$. Because such values of parameters respect conditions \eqref{eq:relation-a-b-P} then also the normalization map preserves in the commutative case the Poisson structure \eqref{eq:Poisson-b} and satisfies the functional pentagon equation.
\end{Cor}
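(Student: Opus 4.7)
The plan is to verify the claim in two straightforward steps: first match the two maps formula-by-formula under ultra-locality, and then simply read off Poisson-invariance and the pentagon property from the previously established Propositions~\ref{prop:S-G-P} and~\ref{prop:pentagon-S-P} once we check that $\alpha=\beta=1$ satisfies the constraints~\eqref{eq:relation-a-b-P}.

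For the first step, I would substitute $\alpha=\beta=1$ into the Poisson Veblen map \eqref{eq:S-P-wu1}-\eqref{eq:S-P-wu2} and rename $w_i\mapsto y_i$, $u_i\mapsto x_i$, obtaining
\begin{equation*}
y_1' = (x_2+y_1)y_2^{-1},\quad x_1' = y_2^{-1} x_1,\quad y_2' = y_1 y_2 (x_2+y_1)^{-1},\quad x_2' = x_2 x_1 (x_2+y_1)^{-1}.
\end{equation*}
These should be compared with \eqref{eq:S-bar-yx1}-\eqref{eq:S-bar-yx2}. The first equation already matches $\bar y_1$ literally. For the others, one uses ultra-locality: variables with different indices commute, so $y_2^{-1} x_1 = x_1 y_2^{-1}$ (matching $\bar x_1$), and $y_1$ commutes with $y_2$ and $x_2$, hence also with $(x_2+y_1)^{-1}$, giving $y_1 y_2(x_2+y_1)^{-1} = y_2(x_2+y_1)^{-1} y_1$ (matching $\bar y_2$). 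The same reordering handles $\bar x_2 = x_1(x_2+y_1)^{-1}x_2 = x_2 x_1(x_2+y_1)^{-1}$ since $x_1$ commutes with both $x_2$ and $y_1$. This is the only place where ultra-locality is really used and is the sole point requiring care, though it amounts to nothing more than swapping commuting factors.

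For the second step, I would simply substitute $\alpha_\bullet = \beta_\bullet = 1$ into each of the five constraints \eqref{eq:relation-a-b-P}; each collapses to the identity $1 = 1\cdot 1$, so $\alpha=\beta=1$ is a solution.

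Given these two verifications, Proposition~\ref{prop:S-G-P} (applied in the commutative setting with $\alpha=\beta=1$) tells us that the local Veblen map $S(1,1)$ preserves the Poisson bracket \eqref{eq:Poisson-b}, and Proposition~\ref{prop:pentagon-S-P} tells us it satisfies the functional pentagon equation. Because $\bar S$ agrees with $S(1,1)$ under ultra-locality (in particular in the commutative case where ultra-locality is automatic), the same two properties transfer to $\bar S$, completing the proof. The main potential obstacle, a purely technical one, is keeping the non-commutative factor orderings straight when comparing the two sets of formulas; but since ultra-locality makes every pair with distinct indices commute, the reordering is immediate and no genuine difficulty arises.
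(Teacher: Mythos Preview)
Your proof is correct and follows precisely the route the paper leaves implicit: the Corollary is stated without proof in the paper, and your two-step verification (matching formulas under ultra-locality, then checking that $\alpha=\beta=1$ solves \eqref{eq:relation-a-b-P}) is exactly the intended argument. One tiny remark: in the $\bar x_2$ comparison you cite only that $x_1$ commutes with $x_2$ and $y_1$, but the reordering also uses that $x_2$ commutes with $y_1$; this is of course part of ultra-locality and you already invoked it, so no real gap.
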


\subsection{The Weyl reduction of the Veblen map}
Motivated by Proposition~\ref{prop:pentagon-S-P} we formulate its quantum analogue 
by replacing the Poisson brackets \eqref{eq:Poisson-b} by the corresponding Weyl commutation relations
\begin{gather} \label{eq:Weyl-cr}
 u_i w_j = q^{2\delta_{ij}} w_j u_i, \qquad
 w_i w_j = w_j w_i, \qquad  u_i u_j = u_j u_i , \qquad i,j=1,2 ,
\end{gather}
where we assume that $q$ is a non-zero element of the center of $\DD$. Because formulas \eqref{eq:S-P-wu1}-\eqref{eq:S-P-wu2} and \eqref{eq:S-P-inv-wu1}-\eqref{eq:S-P-inv-wu1} were derived under the locality assumption only we may ask if they preserve the commutation relations~\eqref{eq:Weyl-cr}. It turns out that such a simple choice works, which can be checked by direct calculation.  
\begin{Prop} \label{prop:S-Weyl}
The local Veblen map  $S(\alpha,\beta):\DD^2\times\DD^2 \dashrightarrow \DD^2\times\DD^2$, given by equations \eqref{eq:S-P-wu1}-\eqref{eq:S-P-wu2} with parameters $\alpha,\beta$, $(\alpha,\beta)\neq(0,0)$, from the center of $\DD$, preserves the Weyl commutation relations.
\end{Prop}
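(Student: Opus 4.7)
The plan is to verify the six Weyl relations \eqref{eq:Weyl-cr} for the images $w_1', u_1', w_2', u_2'$ directly. Explicitly one must check the three commutations $w_i' w_j' = w_j' w_i'$, $u_i' u_j' = u_j' u_i'$, and $u_i' w_j' = w_j' u_i'$ for $i\ne j$, together with the two $q$-commutations $u_i' w_i' = q^2 w_i' u_i'$.

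The computations become transparent once one introduces the auxiliary element $X := \alpha u_2 + \beta w_1$, in terms of which the map reads
\begin{equation*}
w_1' = X w_2^{-1}, \qquad u_1' = w_2^{-1} u_1, \qquad w_2' = w_1 w_2 X^{-1}, \qquad u_2' = u_2 u_1 X^{-1}.
\end{equation*}
Since $\alpha$, $\beta$ lie in the centre and $u_2$, $w_1$ commute with both summands of $X$, one has $[X, u_2] = [X, w_1] = 0$ immediately. The only nontrivial commutation rules are
\begin{equation*}
u_1 X = (\alpha u_2 + q^2 \beta w_1)\, u_1, \qquad X w_2 = w_2\, (q^2 \alpha u_2 + \beta w_1),
\end{equation*}
i.e.\ conjugation by $u_1^{\pm 1}$ or $w_2^{\pm 1}$ rescales one of the coefficients of $X$ by a power of $q^{\pm 2}$. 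This self-reproducing behaviour, built into the form \eqref{eq:G-P} of the gauge, is exactly what makes the verification close.

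Given these preparatory identities, each of the six target relations reduces to a short rearrangement. The shortest is $w_1' w_2' = X w_2^{-1} w_1 w_2 X^{-1} = X w_1 X^{-1} = w_1$, and symmetrically $w_2' w_1' = w_1 w_2 X^{-1} X w_2^{-1} = w_1$. The identity $u_1' u_2' = u_2' u_1'$ uses $u_1 u_2 = u_2 u_1$ together with $u_1 X^{-1} = (\alpha u_2 + q^2 \beta w_1)^{-1} u_1$. The two mixed commutations $u_1' w_2' = w_2' u_1'$ and $u_2' w_1' = w_1' u_2'$ follow from $[u_1, w_2] = 0$ and $[X, u_2] = [X, w_1] = 0$. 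Finally, the two $q$-commutations $u_i' w_i' = q^2 w_i' u_i'$ are obtained by moving $u_i$ past the corresponding $w_i$ and $X^{\pm 1}$ factors; on both sides, the surplus $q$-powers collected when commuting past $X^{\pm 1}$ cancel against those from commuting past $w_i$, leaving exactly the factor $q^2$.

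The main obstacle is purely bookkeeping: tracking the $q$-powers generated when $X^{\pm 1}$ is moved past $u_1$ or $w_2$, and recognising the cancellations. No deeper identity is required, since the gauge \eqref{eq:G-P} was engineered so that $X$ is covariant under conjugation by the noncentral generators. The hypothesis $(\alpha,\beta)\ne(0,0)$ ensures that $X$ is nonzero and hence invertible in the skew field of fractions of the algebra generated by $w_i, u_i$, so that all formulas make sense as claimed.
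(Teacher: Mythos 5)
Your proposal is correct and takes essentially the same route as the paper, which disposes of this proposition with the remark that it ``can be checked by direct calculation''; introducing $X=\alpha u_2+\beta w_1$ and its $q$-covariance under conjugation by $u_1$ and $w_2$ is precisely the bookkeeping that makes that calculation close. One minor imprecision: the mixed relations $u_1'w_2'=w_2'u_1'$ and $u_2'w_1'=w_1'u_2'$ do not follow from $[u_1,w_2]=[X,u_2]=[X,w_1]=0$ alone, but also require the cancellation of the $q^{\pm 2}$ factors produced by your covariance identities (e.g.\ $u_1'w_2'=q^2w_1u_1X^{-1}$ while $w_2'u_1'=w_1\bigl(q^{-2}\alpha u_2+\beta w_1\bigr)^{-1}u_1$, which agree because $q^{-2}\alpha u_2+\beta w_1=q^{-2}(\alpha u_2+q^2\beta w_1)$), exactly the kind of cancellation you already invoke for the $q$-commutations.
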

We have shown in particular, that the map \eqref{eq:S-P-wu1}-\eqref{eq:S-P-wu2} with entries satisfying relations \eqref{eq:Weyl-cr} preserves ultra-locality of the variables. It turns out that assuming only the ultra-locality of the input it is not enough to prove the ultra-locality of the output. We have however the following remarkable fact.
\begin{Prop}
Let $\Bbbk\subset \mathcal{Z}(\DD)$ be a subfield of the center of the division ring $\DD$ considered then as a $\Bbbk$-algebra. Denote by $\mathcal{A}_i$, $i=1,2$, $\Bbbk$-subalgebras of $\DD$ generated by elements $w_i, u_i\in\DD$, and by $\DD_i\subset \DD$ denote division hulls of $\mathcal{A}_i$. Assume:
\begin{enumerate}
\item ultra-locality of the algebras, i.e. $\mathcal{A}_i\subset \mathcal{C}_{\DD}(\mathcal{A}_j)$, $j\neq i$ is a subset of the centralizer of  $\mathcal{A}_j$ in $\DD$;
\item the general position condition, i.e. $\DD_1 \cap \DD_2 = \Bbbk$.
\end{enumerate}
If the algebras $\mathcal{A}_i^\prime$, $i=1,2$, generated by $w_i^\prime, u_i^\prime$ defined by equations \eqref{eq:S-P-wu1}-\eqref{eq:S-P-wu2} with $(\alpha,\beta)\in \Bbbk^2\setminus\{(0,0)\}$ satisfy also conditions (1) and (2) then there exists $q\in \Bbbk^\times$ such that the remaining part of the commutation relations \eqref{eq:Weyl-cr} is satisfied as well.
\end{Prop}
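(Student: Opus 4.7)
The strategy is to interpret each of the four output ultra-locality relations $[X_1',Y_2']=0$, $X,Y\in\{w,u\}$, as an equation in $\DD$ constraining the multiplicative commutators $\rho_i := u_iw_iu_i^{-1}w_i^{-1}\in\DD_i$. Once I show $\rho_1=\rho_2\in\Bbbk^\times$, setting $q^2$ equal to this common value immediately yields the remaining Weyl relations $u_iw_i=q^2w_iu_i$, the off-diagonal relations being already contained in hypothesis~(1) on the input.

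I would begin by observing that mutual centralisation propagates from the algebras $\mathcal{A}_1,\mathcal{A}_2$ to their division hulls $\DD_1,\DD_2$, via a routine argument using that elements of $\DD_i$ are rational expressions in elements of $\mathcal{A}_i$ and that commutators distribute over products and inverses. Writing $P=\alpha u_2+\beta w_1$ and using input ultra-locality one finds that $u_2$ and $w_1$ centralise $P$, while conjugation of $P$ by $u_1$ and by $w_2$ acts non-trivially through $\rho_1$ and through $\sigma_2 := w_2u_2w_2^{-1}\in\DD_2$ respectively. Expanding one of the output ultra-locality identities — concretely, $[u_1',w_2']=0$ — substituting the formulas \eqref{eq:S-P-wu1}--\eqref{eq:S-P-wu2}, cancelling the common factor $w_1u_1$ and conjugating appropriately by $w_2$, one obtains after a brief calculation (assuming $\alpha\neq 0$) the key relation
\[
\rho_1^{-1}u_2 \;=\; \sigma_2,
\]
whose left-hand side lies in $\DD_1\cdot u_2\subset \DD$ while the right-hand side lies in $\DD_2$.

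The decisive ingredient is then the general-position hypothesis $\DD_1\cap\DD_2=\Bbbk$. Combined with mutual centralisation it implies that multiplication $\DD_1\otimes_{\Bbbk}\DD_2\to\DD$ is an injective $\Bbbk$-algebra map — a noncommutative analogue of linear disjointness of subfields. The displayed identity then forces $\rho_1^{-1}\in\Bbbk^\times$, and the relation $\sigma_2=\rho_2^{-1}u_2$ (immediate from $u_2w_2=\rho_2w_2u_2$ once $\rho_2$ is known to commute with $w_2$) identifies $\rho_2=\rho_1$. Setting $q^2:=\rho_1=\rho_2\in\Bbbk^\times$ finishes the proof; the case $\alpha=0$ (necessarily $\beta\neq 0$) is handled symmetrically by starting from $[w_1',u_2']=0$, and the remaining ultra-locality relations only provide consistency checks. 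The main technical obstacle is precisely the injectivity of $\DD_1\otimes_{\Bbbk}\DD_2\hookrightarrow\DD$ under hypotheses (1) and (2) on the input; once this separation principle is in hand, the rest of the argument is direct bookkeeping.
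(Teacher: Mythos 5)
Your computational route is essentially the paper's: expand one output ultra-locality relation using \eqref{eq:S-P-wu1}--\eqref{eq:S-P-wu2} together with input ultra-locality, and land on an identity equating a multiplicative commutator built from $(w_1,u_1)$ with one built from $(w_2,u_2)$. (The paper uses $u_1'u_2'=u_2'u_1'$, which for $\beta\neq 0$ yields $w_1^{-1}u_1w_1u_1^{-1}=w_2u_2^{-1}w_2^{-1}u_2$; your choice $u_1'w_2'=w_2'u_1'$ for $\alpha\neq 0$ yields $u_2\,(u_1w_1^{-1}u_1^{-1}w_1)=w_2u_2w_2^{-1}$, i.e.\ your displayed relation up to an innocuous conjugation of $\rho_1^{-1}$; either works, and your two-case split over $(\alpha,\beta)$ matches the paper's remark that the remaining output relations are identities or equivalent equations.) Your preliminary observations are also fine: mutual centralisation does propagate from $\mathcal{A}_i$ to the division hulls $\DD_i$ (centralizers in a division ring are division subrings), and the off-diagonal Weyl relations are precisely the assumed ultra-locality.

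The gap is in how you close the argument. You deduce $\rho_1^{-1}\in\Bbbk^\times$ from the key relation by appealing to injectivity of the multiplication map $\DD_1\otimes_{\Bbbk}\DD_2\to\DD$, which you leave unproven and yourself flag as the main technical obstacle. That ``separation principle'' is not a consequence of hypotheses (1) and (2): already for commutative $\DD$ a trivial intersection does not imply linear disjointness --- take $\DD=\CC$, $\Bbbk=\QQ$, $\DD_1=\QQ(\sqrt[3]{2})$, $\DD_2=\QQ(\omega\sqrt[3]{2})$ with $\omega$ a primitive cube root of unity: condition (1) is vacuous, $\DD_1\cap\DD_2=\QQ$, yet $\DD_1\otimes_{\QQ}\DD_2\to\CC$ has a kernel because the compositum has degree $6<9$. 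So the lemma you defer to cannot be established in the generality you state it, and as written the decisive step is unsupported. Fortunately it is not needed: since the $\DD_1$-commutator commutes with $u_2$, your relation $\rho_1^{-1}u_2=\sigma_2$ rearranges at once to $\rho_1^{-1}=\sigma_2u_2^{-1}=w_2u_2w_2^{-1}u_2^{-1}=\rho_2^{-1}\in\DD_2$; as $\rho_1^{-1}\in\DD_1$, the general position hypothesis alone gives $\rho_1^{-1}\in\DD_1\cap\DD_2=\Bbbk$, whence $q^2:=\rho_1=\rho_2\in\Bbbk^\times$ and the remaining Weyl relations follow (the conjugation discrepancy mentioned above disappears once centrality is known). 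This one-line rearrangement is exactly how the paper concludes from its version of the relation, one side lying in $\DD_1$ and the other in $\DD_2$. Replace the tensor-product claim by this observation and your proof is complete, and then coincides in substance with the paper's.
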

\begin{proof}
The condition $u_1^\prime u_2^\prime = u_2^\prime u_1^\prime$ and ultra-locality of the input variables in equations \eqref{eq:S-P-wu1}-\eqref{eq:S-P-wu2} leads to relation
\begin{equation}
w_1^{-1} u_1 w_1 u_1^{-1} = w_2 u_2^{-1}w_2^{-1} u_2,
\end{equation}
which due to the general position assumption concludes the proof (other ultra-locality relations for the output variables are identically satisfied or give equations equivalent to the above one).  
\end{proof}
\begin{Rem}
The analogous problem for the tetrahedron equation has been solved in \cite{Sergeev-super}.
\end{Rem}
Another question which should be answered before considering the pentagon condition in the case under consideration is if in the quantization of the Poisson structure of the map the relations \eqref{eq:relation-a-b-P} remain unchanged. Also here the answer is positive.
\begin{Lem}
The condition \eqref{eq:UV-XYZ} when applied to the gauge parameters of the form \eqref{eq:G-P} gives, in the case of Weyl commutation relations \eqref{eq:Weyl-cr}, the same constraints \eqref{eq:relation-a-b-P} on the parameters $\alpha$, $\beta$  like in the Poisson case.
\end{Lem}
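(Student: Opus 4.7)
The approach is direct verification, running the computation already performed for the Poisson case (Proposition~\ref{prop:pentagon-S-P} and the preceding Lemma) but now in the Weyl algebra, and checking that the ordering ambiguities introduced by \eqref{eq:Weyl-cr} do not modify the parameter constraints. Concretely, using the Remark after Theorem~\ref{th:pentagon-S} we write each gauge function in the ansatz form \eqref{eq:G-P} with its own parameters and its own argument tuple, namely
\begin{equation*}
X = (\alpha_X u_3 + \beta_X w_2) w_3^{-1} u_2^{-1}, \qquad U = (\alpha_U u_2 + \beta_U w_1) w_2^{-1} u_1^{-1},
\end{equation*}
and analogous expressions for $Y$, $V$, $Z$ evaluated at the shifted arguments listed in that Remark. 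Since the parameters $\alpha_*, \beta_*$ and the deformation parameter $q$ are assumed central, the ansatz is well defined in the Weyl algebra.

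The plan is then to substitute these ansätze into the two identities $U = ZX u_2 w_2^{-1}$ and $V = Y w_2 u_2^{-1}$ and compare both sides as Laurent expressions in the ultra-local Weyl variables $w_i, u_j$, $i,j=1,2,3$. By ultra-locality in \eqref{eq:Weyl-cr}, factors carrying distinct indices commute freely, so all nontrivial reorderings come from the single relation $u_i w_i = q^{2} w_i u_i$ within a fixed index. The key observation is that the ansatz \eqref{eq:G-P} is \emph{homogeneous} with respect to the $q$-grading coming from this relation: both summands $\alpha u_i$ and $\beta w_j$ in a factor $(\alpha u + \beta w)w^{-1}u^{-1}$ carry the same degree in each index, so any $q$-power accumulated by moving such a factor past another Weyl monomial multiplies the $\alpha$-term and the $\beta$-term by the \emph{same} scalar, and hence drops out of the eventual coefficient-matching.

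Consequently, after the $q$-factors have been canceled on both sides of each identity, the equations for the parameters reduce, monomial by monomial, to the very same scalar relations one obtains in the commutative case, i.e.\ to the system \eqref{eq:relation-a-b-P}. The verification amounts to checking the five independent relations $\alpha_Y=\alpha_U\alpha_V$, $\alpha_X=\alpha_V\beta_Z$, $\alpha_Z=\alpha_U\beta_X$, $\beta_U=\beta_Y\beta_Z$, $\beta_V=\beta_X\beta_Y$ by expanding both sides of \eqref{eq:UV-XYZ} and identifying the coefficients of the two basic monomials arising from $(\alpha u + \beta w)$.

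The main obstacle is the combinatorial bookkeeping of the Weyl reorderings: expressions such as $u_2 w_2^{-1}$, $u_3 w_3^{-1}$ enter repeatedly into the arguments of $Y$ and $Z$, and naively one would expect spurious $q$-factors to obstruct the parameter identification. The clean resolution is the $q$-grading argument sketched above; it can be phrased as an invariance of the ansatz under rescaling $u_i \mapsto \lambda_i u_i$ with central $\lambda_i$, which is precisely the freedom absorbed by the Weyl ordering. Once this is noted, the computation is formally identical to the commutative one of the preceding Lemma, so the resulting constraints are unchanged.
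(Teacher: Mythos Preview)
Your approach is the same as the paper's: direct verification of the two identities in \eqref{eq:UV-XYZ} with the gauge ansatz \eqref{eq:G-P}, keeping track of the ordering dictated by \eqref{eq:Weyl-cr}. The paper's proof says no more than this.

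There is, however, a genuine gap in the heuristic you offer to shortcut the calculation. Your claim that ``both summands $\alpha u_i$ and $\beta w_j$ in a factor $(\alpha u + \beta w)w^{-1}u^{-1}$ carry the same degree in each index'' is false. In the prototype $G=(\alpha u_2+\beta w_1)w_2^{-1}u_1^{-1}$ the $\alpha$-term has $(w_1,u_1,w_2,u_2)$-degree $(0,-1,-1,1)$ while the $\beta$-term has degree $(1,-1,-1,0)$; they differ in the $w_1$ and $u_2$ slots. Concretely, commuting the bracket past $w_2$ gives $(\alpha u_2+\beta w_1)w_2 = w_2(\alpha q^{2}u_2+\beta w_1)$, and past $u_1$ gives $(\alpha u_2+\beta w_1)u_1 = u_1(\alpha u_2+\beta q^{-2} w_1)$: the two terms pick up \emph{different} $q$-powers, contrary to your assertion. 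Hence the reduction ``the computation is formally identical to the commutative one'' does not follow from the grading argument alone.

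What actually makes the constraints \eqref{eq:relation-a-b-P} survive unchanged is a cancellation specific to the particular orderings fixed in \eqref{eq:G-P} and \eqref{eq:S-P-wu1}--\eqref{eq:S-P-wu2}; that is precisely the content of the paper's remark that ``the order of factors matters''. This cancellation is only seen by carrying out the substitution and Weyl reorderings explicitly for each of the two identities $U=ZXu_2w_2^{-1}$ and $V=Yw_2u_2^{-1}$ and then matching monomials. Your write-up should drop the homogeneity shortcut and simply present (or at least indicate) that explicit computation.
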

\begin{proof}
 By direct verification, notice however that now the order of factors in equations \eqref{eq:G-P} and \eqref{eq:S-P-wu1}-\eqref{eq:S-P-wu2} matters. 
\end{proof}
\begin{Prop}
\label{prop:pentagon-S-q}
The  Veblen map $S(\alpha,\beta):\DD^2\times\DD^2 \dashrightarrow \DD^2\times\DD^2$ given by \eqref{eq:S-P-wu1}-\eqref{eq:S-P-wu2} when applied to the pairs $(w_i,u_i)$ constrained by the Weyl commutation relations \eqref{eq:Weyl-cr} with $i,j=1,2,3$,
satisfies the functional pentagon relation \eqref{eq:pentagon-funct-dynam-P}.
\end{Prop}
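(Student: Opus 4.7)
The proof assembles three ingredients already in place: Theorem~\ref{th:pentagon-S} for $S^G$ over an arbitrary division ring, Proposition~\ref{prop:S-Weyl} on preservation of the Weyl commutation relations, and the preceding Lemma identifying the constraint~\eqref{eq:UV-XYZ} with~\eqref{eq:relation-a-b-P} in the Weyl case. The plan is to apply the theorem with each of the five gauge parameters specialised to the form~\eqref{eq:G-P}, then to monitor the ultra-locality along the five-step composition.

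First I would take the division ring $\DD$ to contain the three Weyl pairs $(w_i,u_i)$, $i=1,2,3$, satisfying~\eqref{eq:Weyl-cr}, and apply Theorem~\ref{th:pentagon-S} with each gauge function chosen of the special form~\eqref{eq:G-P}, that is
\begin{equation*}
U = (\alpha_U u_2 + \beta_U w_1) w_2^{-1} u_1^{-1}, \qquad X = (\alpha_X u_3 + \beta_X w_2) w_3^{-1} u_2^{-1},
\end{equation*}
and analogously for $V$, $Y$, $Z$, with all coefficients in a central subfield $\Bbbk\subset \mathcal{Z}(\DD)$. Theorem~\ref{th:pentagon-S} then produces the pentagon identity~\eqref{eq:pentagon-funct-dynam} as an identity in $\DD$, provided~\eqref{eq:UV-XYZ} holds.

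Next I would verify that the composition is well defined on the Weyl algebra at each intermediate stage. By Proposition~\ref{prop:S-Weyl}, a local map $S(\alpha,\beta)$ at a pair $(i,j)$ produces outputs again obeying~\eqref{eq:Weyl-cr} on that pair, and ultra-locality with the third untouched pair is automatic since the outputs lie in the subalgebra generated by the input pair. Consequently each subsequent gauge function --- $X$ evaluated on the new variables at positions $2,3$, then $Y$ on the new positions $1,3$, and so on --- remains of the form~\eqref{eq:G-P} with central coefficients, as required by the theorem. It then suffices to invoke the preceding Lemma, which converts the abstract constraint~\eqref{eq:UV-XYZ} into the purely scalar system~\eqref{eq:relation-a-b-P}, to arrive at the explicit pentagon relation~\eqref{eq:pentagon-funct-dynam-P}.

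The main subtlety --- and the step which genuinely distinguishes the Weyl case from the Poisson one --- is the careful ordering of non-commutative factors in~\eqref{eq:G-P} and~\eqref{eq:S-P-wu1}--\eqref{eq:S-P-wu2}. The ordering adopted is precisely the one for which the preceding Lemma delivers the same constraints~\eqref{eq:relation-a-b-P} as in the classical Poisson limit; a different normal ordering would in general produce $q$-dependent correction terms obstructing the pentagon identity. Once the ordering is fixed as above, no further quantum corrections arise and the proof reduces to the non-commutative statement of Theorem~\ref{th:pentagon-S}.
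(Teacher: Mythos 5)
Your argument is correct and follows essentially the route the paper intends: the paper states Proposition~\ref{prop:pentagon-S-q} without a separate proof precisely because it is the combination of Theorem~\ref{th:pentagon-S}, the locality observation of Corollary~\ref{cor:locality}, the preservation of the Weyl relations (Proposition~\ref{prop:S-Weyl}, which keeps every intermediate stage ultra-local so the reduced formulas \eqref{eq:S-P-wu1}--\eqref{eq:S-P-wu2} apply throughout), and the preceding Lemma converting \eqref{eq:UV-XYZ} into the scalar constraints \eqref{eq:relation-a-b-P}. Your write-up assembles exactly these ingredients, so it matches the paper's reasoning; the only nitpick is that on the left-hand side the gauge $X$ is evaluated on the original variables (it acts first), not on updated ones, but this does not affect the argument.
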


\begin{Rem}
Corollary~\ref{cor:norm-P} applies also when the Poisson structure \eqref{eq:Poisson-b} is replaced by the Weyl commutation relations~\eqref{eq:Weyl-cr}.
\end{Rem}

\subsection{Ring theoretical structures behind the Weyl commutation relations}
In this Section we briefly recapitulate known properties of the $q$-plane, which are relevant from the point of view of the paper. Moreover we sketch the idea how to get the solution of the quantum pentagon equation from the quantum Veblen map.

Let $\Bbbk$ be a field. Given $q\in\Bbbk^\times$, by $\mathcal{A}_q=\Bbbk\{ w, u\} /\mathcal{I}_q$ denote the standard Manin's quantum plane (see for example~\cite{Kassel}) defined as a quotient of the free associative $\Bbbk$-algebra with generators $w,u$ by the two sided ideal $\mathcal{I}_q$ generated by $uw - q^2 wu$. Its $N$-th tensor power $\mathcal{A}_q^{\otimes N}$ is isomorphic to the quotient algebra $\mathcal{A}_q(N) = \Bbbk\{ w_1,u_1, \dots ,w_N, u_N \} / \mathcal{I}_q(N)$, where the ideal $\mathcal{I}_q(N)$ is generated  by relations \eqref{eq:Weyl-cr} for  $1\leq i,j \leq N$. The isomorphism follows from identification $1\otimes \dots \otimes \stackrel{i}{w} \dots \otimes 1$ with $w_i$, and $1\otimes \dots \otimes \stackrel{i}{u} \otimes \dots \otimes 1$ with $u_i$, where $1 = 1_{\mathcal{A}_q}$. It is well known (see for example \cite{BrownGoodearl}) that $\mathcal{A}_q(N)$ is a Noetherian domain, and thus it has a division ring of fractions (quantum rational functions or quantum Weyl division algebra) denoted here by $\DD_q(N)$.
\begin{Rem}
In some theoretical physics' papers $\DD_q(N)$ is referred to as the Weyl algebra.
\end{Rem}
\begin{Rem}
For $q^2\neq 1$ the $q$-plane is rationally isomorphic to the (first) $q$-Weyl algebra $\tilde{\mathcal{A}}_q = \Bbbk\{ \tilde{w}, \tilde{u}\} /\tilde{\mathcal{I}}_q$, where two sided ideal $\tilde{\mathcal{I}}_q$ generated by 
$\tilde{u}\tilde{w} - q^2 \tilde{w} \tilde{u} - 1$ (recall that two Noetherian domains are called rationally isomorphic if  their division rings of fractions are isomorphic). The isomorphism is given by relations (see \cite{McConnellPettit})
\begin{equation*}
w = \tilde{w}, \qquad u = 1 +(q^2-1) \tilde{w} \tilde{u}, \qquad \tilde{u} = w^{-1}\frac{u-1}{q^2-1}.
\end{equation*} 
\end{Rem}
 
An intermediate object between $\mathcal{A}_q(n)$ and $\DD_q(N)$ is the corresponding $q$-torus $\mathcal{T}_q(N) =\Bbbk\{ w_1^{\pm1}, u_1^{\pm1}, \dots w_N^{\pm 1}, u_N^{\pm 1}\} /\mathcal{I}_q(N)$. It is known (see Proposition 1.3 of \cite{McConnellPettit}) that for $q$ not being a root of unity, which we assume in what follows, $\mathcal{T}_q(N)$ is simple, and its centre is $\Bbbk$. Elements of $\mathcal{T}_q(N)$ are Laurent polynomials in $w_1,\dots , u_N$, i.e. finite sums of $w_1^{i_1} u_1^{j_1} \dots  w_N^{i_N} u_N^{j_N}$, $(i_1,j_1,\dots ,i_N, j_N)\in\ZZ^{2N}$ with coefficients in $\Bbbk$. Another close related object is \cite{McConnellPettit} the 
Hahn--Mal'cev--Neumann completion \cite{Lam} of $\mathcal{T}_q(N)$, which will 
be denoted by $\bar{\DD}_q(N)$. Its elements are Laurent series with support 
bounded from below (we consider the lexicographic order in $\ZZ^{2N}$). It turns out that $\bar{\DD}_q(N)$ is a division ring containing 
$\DD_q(N)$ as a subring~\cite{Artamonov}.

By Proposition~\ref{prop:S-Weyl} we have:
\begin{Prop}
The following elements of $\DD_q(2)$
\begin{align} \label{eq:S-P-wu1-ten}
w_1^\prime & = (\alpha 1\otimes u + \beta w\otimes 1 ) 1 \otimes w^{-1}   , & 
u_1^\prime & = u \otimes w^{-1}, \\
\label{eq:S-P-wu2-ten}
w_2^\prime & = w \otimes w(\alpha 1\otimes u + \beta w\otimes 1)^{-1}, & 
u_2^\prime &= u\otimes u(\alpha 1\otimes u + \beta w\otimes 1)^{-1},
\end{align}
with $\Bbbk\ni (\alpha, \beta) \neq (0,0)$, generate a subalgebra $\mathcal{A}^\prime_q(2)$ of $\DD_q(2)$ isomorphic to $\mathcal{A}_q(2)$.
\end{Prop}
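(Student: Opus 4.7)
The plan is to exhibit a $\Bbbk$-algebra homomorphism $\phi\colon \mathcal{A}_q(2) \to \DD_q(2)$ sending the four generators of $\mathcal{A}_q(2)$ to the elements $w_i',u_i'$ defined in \eqref{eq:S-P-wu1-ten}-\eqref{eq:S-P-wu2-ten}, and then to prove that $\phi$ is injective. Since by construction the image of $\phi$ is the subalgebra $\mathcal{A}_q^\prime(2)$, injectivity will yield the required isomorphism $\mathcal{A}_q(2)\cong \mathcal{A}_q^\prime(2)$.

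Well-definedness of $\phi$ is immediate from Proposition~\ref{prop:S-Weyl}. Applied to the pairs $(w_1,u_1)=(w\otimes 1,\, u\otimes 1)$ and $(w_2,u_2)=(1\otimes w,\, 1\otimes u)$, which satisfy the Weyl relations \eqref{eq:Weyl-cr} inside $\mathcal{A}_q(2)$, that proposition states that the output $w_i',u_i'$ obey the very same relations. Because \eqref{eq:Weyl-cr} is a complete set of defining relations for $\mathcal{A}_q(2)$, the assignment on generators extends uniquely to an algebra homomorphism $\phi$, whose image is precisely the subalgebra generated by $w_1',u_1',w_2',u_2'$.

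For injectivity I would extend $\phi$ to a homomorphism $\tilde\phi\colon \mathcal{T}_q(2)\to\DD_q(2)$ from the $q$-torus. This requires each of $w_i',u_i'$ to be invertible in $\DD_q(2)$. The element $u_1'=u\otimes w^{-1}$ is manifestly invertible; for $w_1',w_2',u_2'$ one needs to check that $\alpha(1\otimes u)+\beta(w\otimes 1)$ is non-zero in $\mathcal{A}_q(2)$, which holds because $1\otimes u$ and $w\otimes 1$ are distinct elements of a Poincar\'e--Birkhoff--Witt-type basis of $\mathcal{A}_q(2)$ and thus linearly independent over $\Bbbk$, so the sum vanishes only when $\alpha=\beta=0$, which is excluded by hypothesis. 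Once $\tilde\phi$ is in place, the excerpt's recollection that $\mathcal{T}_q(2)$ is simple for $q$ not a root of unity forces $\ker\tilde\phi$ to be either zero or all of $\mathcal{T}_q(2)$; since $\tilde\phi(1)=1\neq 0$, we conclude $\ker\tilde\phi=0$, and therefore $\phi$ itself is injective.

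The main obstacle I anticipate is not the simplicity argument, which runs automatically, but rather the invertibility check: it is the single place where the hypothesis $(\alpha,\beta)\in\Bbbk^2\setminus\{(0,0)\}$ is used, and it is also what legitimises treating the rational expressions of \eqref{eq:S-P-wu1-ten}-\eqref{eq:S-P-wu2-ten} as bona fide elements of the division ring rather than as formal symbols. An alternative route would be to exhibit a rational left inverse of $\phi$ using formulas \eqref{eq:S-P-inv-wu1}-\eqref{eq:S-P-inv-wu2}, but the non-commutative ordering makes that path more delicate than the detour through $\mathcal{T}_q(2)$.
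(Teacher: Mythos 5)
Your proposal is correct and is essentially the paper's argument: the paper deduces the proposition directly from Proposition~\ref{prop:S-Weyl}, which is exactly your well-definedness step (the primed elements satisfy the Weyl relations, so the assignment on generators defines a homomorphism onto $\mathcal{A}^\prime_q(2)$). Your explicit injectivity argument --- invertibility of the images in the division ring $\DD_q(2)$ (using $(\alpha,\beta)\neq(0,0)$) together with the simplicity of $\mathcal{T}_q(2)$ for $q$ not a root of unity --- fills in a detail the paper leaves implicit, and it does so using precisely the ring-theoretic facts the paper recalls immediately before the proposition, so it is fully in the paper's spirit.
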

\begin{Rem}
The above Veblen map is an automorphism of the division ring $\DD_q(2)$.
\end{Rem}
\begin{Rem}
The $q$-torus $\mathcal{T}^\prime_q(2)$ obtained from $\mathcal{A}^\prime_q(2)$ is a subalgebra of  $\DD_q(2)$ isomorphic to $\mathcal{T}_q(2)$.
\end{Rem}
In the finite dimensional case, by the Skolem--Noether theorem~\cite{FarbDennis}, any isomorphism of two simple subalgebras of a central simple algebra is given by an inner automorphism of the algebra. Motivated by that we will be looking for an element $\boldsymbol{S}(\alpha,\beta)$ of a suitable completion of $\DD_q(2)$, which generates the transformation \eqref{eq:S-P-wu1-ten}-\eqref{eq:S-P-wu2-ten} as an inner automorphism. Then for any rational function $F$ of $(w_1,u_1,w_2,u_2)$ we would have
\begin{equation} \label{eq:S-hat}
\boldsymbol{S}(\alpha,\beta) F(w_1,u_1,w_2,u_2) \boldsymbol{S}(\alpha,\beta)^{-1} =
F(w_1^\prime,u_1^\prime,w_2^\prime,u_2^\prime) ,
\end{equation}
which, together with the functional pentagon relation \eqref{eq:pentagon-funct-dynam-P}, implies that, after eventual rescaling, $\boldsymbol{S}(\alpha,\beta)$ satisfies the quantum dynamical pentagon equation
\begin{equation} \label{eq:pentagon-quant-dynam-P}
\boldsymbol{S}_{23}(\alpha_V \beta_Z, \beta_X)
\boldsymbol{S}_{13}(\alpha_U \alpha_V, \beta_Y)
\boldsymbol{S}_{12}(\alpha_U \beta_X, \beta_Z) = 
\boldsymbol{S}_{12}(\alpha_U, \beta_Y \beta_Z)
\boldsymbol{S}_{23}(\alpha_V, \beta_X \beta_Y).
\end{equation}
\begin{Rem}
Notice that even in the root of unity case we are not in the finite dimensional situation because the map in general is not the identity map on the center. See the relevant discussion in \cite{Sergeev-finite} for the quantum tetrahedron equation.
\end{Rem}

\subsection{Quantization of the Veblen and normalization maps}
Let us consider the case $\Bbbk = \CC$ and equip the algebra $\mathcal{A}_q(2)$ with a $*$-structure, i.e. an involutive antiautomorphism $a\mapsto a^*$. Assume that $u^* = \mu u$ and $w^* = \nu w$, for $\mu,\nu \in\CC$, then it is easy to show that $q$, $\mu$ and $\nu$ must have modulus $1$. We will be interested when the Weyl reduction of the Veblen map preserves that additional Hermitean condition. By direct calculation we obtain the following result.
\begin{Prop} \label{prop:hermit}
The Weyl Veblen map is a $*$-map if and only if $(\alpha, \beta) \in \RR^2 \setminus\{(0,0\}$, and $\mu = q^{-2}$, $\nu = 1$.
\end{Prop}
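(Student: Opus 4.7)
The plan is to verify the biconditional by direct computation. ``$*$-map'' here means that the new generators $w_i^\prime, u_i^\prime$ satisfy the same Hermitean conditions as the old ones: $(w_i^\prime)^* = \nu w_i^\prime$ and $(u_i^\prime)^* = \mu u_i^\prime$ for $i=1,2$. This produces four equations. I would expand each left-hand side using the antimultiplicative property $(AB)^*=B^*A^*$ together with $u_i^* = \mu u_i$, $w_i^* = \nu w_i$, and $\bar c$ for scalars $c\in\CC$, and then normalise using the Weyl relations \eqref{eq:Weyl-cr} before matching coefficients of linearly independent monomials. The single algebraic tool needed is the commutation rule $w_j^{-1} u_i = q^{2\delta_{ij}} u_i w_j^{-1}$, which lets one transport a factor $u$ past a factor $w$ at the cost of a $q^{\pm 2}$.

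I would begin with $u_1^\prime = w_2^{-1} u_1$, which is by far the cheapest case: $u_1$ and $w_2$ commute, so antimultiplicativity gives $(u_1^\prime)^* = \mu \nu^{-1} u_1^\prime$ directly, and the requirement $(u_1^\prime)^* = \mu u_1^\prime$ forces $\nu = 1$ before any further manipulation.

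With $\nu = 1$ I would then treat $w_1^\prime = (\alpha u_2 + \beta w_1) w_2^{-1}$ and $u_2^\prime = u_2 u_1 (\alpha u_2 + \beta w_1)^{-1}$. For the former, a single application of $w_2^{-1} u_2 = q^2 u_2 w_2^{-1}$ brings $(w_1^\prime)^*$ into the form $(\bar\alpha \mu q^2 u_2 + \bar\beta w_1) w_2^{-1}$, and matching coefficients against $w_1^\prime$ yields $\bar\alpha \mu q^2 = \alpha$ and $\bar\beta = \beta$. For the latter, the useful fact is the conjugation rule $A (u_1 u_2) = (u_1 u_2) A^\prime$ with $A = \bar\alpha \mu u_2 + \bar\beta w_1$ and $A^\prime = \bar\alpha \mu u_2 + \bar\beta q^{-2} w_1$, a consequence of $(u_1 u_2) w_1 = q^2 w_1 (u_1 u_2)$ together with $[u_2, u_1 u_2]=0$. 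It rewrites $(u_2^\prime)^* = \mu^2 A^{-1} u_1 u_2$ as $\mu^2 (u_1 u_2) (A^\prime)^{-1}$, and equating with $\mu u_2^\prime = \mu (u_1 u_2)(\alpha u_2 + \beta w_1)^{-1}$ produces $\bar\alpha = \alpha$ and $\bar\beta = \mu q^2 \beta$.

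Combining the four relations, $\bar\alpha = \alpha$ and $\bar\beta = \beta$ force $(\alpha, \beta) \in \RR^2$, after which the two remaining constraints both collapse to $(\mu q^2 - 1)\alpha = 0 = (\mu q^2 - 1)\beta$; the hypothesis $(\alpha, \beta) \neq (0,0)$ then forces $\mu = q^{-2}$. Conversely, for these values the three relations already handled are satisfied, and the outstanding identity $(w_2^\prime)^* = w_2^\prime$ reduces to the algebraic check $w_1 w_2 (\alpha u_2 + \beta w_1) = (\alpha q^{-2} u_2 + \beta w_1) w_1 w_2$, which follows at once from $w_2 u_2 = q^{-2} u_2 w_2$. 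The only genuine obstacle is bookkeeping with the accumulated $q^{\pm 2}$ factors when moving $u$'s past linear combinations like $\alpha u_2 + \beta w_1$; once the conjugation-by-$u_1 u_2$ rule is formulated once and for all, no further input is required.
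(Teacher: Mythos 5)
Your computation is correct and complete, and it follows exactly the route the paper intends: the paper offers no details beyond ``by direct calculation'', and your direct verification on the generators $u_1^\prime, w_1^\prime, u_2^\prime, w_2^\prime$ (forcing $\nu=1$, then extracting $\bar\alpha\mu q^2=\alpha$, $\bar\beta=\beta$, $\bar\alpha=\alpha$, $\bar\beta=\mu q^2\beta$, hence $\mu=q^{-2}$ and $\alpha,\beta\in\RR$, with the converse check on $w_2^\prime$) is precisely that calculation.
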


The subsequent construction uses a special function which appeared first in works of Faddeev~\cite{Faddeev-LMP} under the name of quantum dilogarithm, and which is closely related to the Barnes double Gamma function~\cite{Barnes,Shintani}. Essentially the same function was used in the theory of non-compact quantum groups in~\cite{Woronowicz-q-exp} as the quantum exponential function. In various equivalent forms thie function was exploited in other works related to quantum integrable models \cite{Ruijsenaars-JMP,Kashaev,FKV-CMP,Kashaev-LCC,KLT-S,FoGo}.

The (non-compact) quantum dilogarithm is a meromorphic function defined by the integral representation 
\begin{equation} \label{eq-q-dil-int}
\varphi_b(z) = \exp \left(  \frac{1}{4} \int_{\RR+i0} \frac{\mathrm{e}^{-2izw}}{\sinh(wb) \sinh (w/b)} \frac{dw}{w} \right),
\end{equation}
where $b\in(0,\infty)$ is a parameter. It satisfies the difference equation
\begin{equation} \label{eq:q-dil-b}
\varphi_b(z - ib/2) = \left( 1 + \mathrm{e}^{2\pi z b} \right) \varphi_b(z + ib/2) ,
\end{equation}
which is used to extended it, from the initial convergence strip of the integral representation \eqref{eq-q-dil-int}, to a meromorphic function on the whole complex plane $\CC$. Its poles (zeros) are at points 
\begin{equation*}
(-)(i\eta + i m b + i n b^{-1}), \qquad m,n\in\NN_0 , \qquad \eta = \frac{1}{2}\left( b + b^{-1}\right).
\end{equation*}
If $b$ is real (or a pure phase $|b|=1$, which is not our case) the function $\varphi_b(z)$ is unitary in the sense that
\begin{equation}
\overline{\varphi_b(z)} = 1/\varphi_b(\bar{z}).
\end{equation}
For detailed discussion of its other remarkable properties we refer to the cited works, see also \cite{Ruijsenaars-JNMP} for a review.  

Let $x$ and $p$ provide the  Schr\"{o}dinger representation in $\mathcal{H}=L^2(\RR)$ of the Heisenberg commutation relations
\begin{equation}
[ x, p ] = \frac{i}{2\pi}.
\end{equation}
We have then the corresponding representation of the Weyl pair which respects the Hermitean restriction described in Proposition~\ref{prop:hermit}
\begin{equation} \label{eq:UW}
U = q\mathrm{e}^{2\pi b x},\qquad W = \mathrm{e}^{2\pi b p}, \qquad U W = \mathrm{e}^{2 \pi i b^2} W U, \qquad q=\mathrm{e}^{\pi i b^2}.
\end{equation}
\begin{Rem}
The additional requirement $b^2 \not\in \QQ$, which corresponds to $q^2$ being root of unity is not essential in this representation.
\end{Rem}
In the class of functions which can be analytically extended around the strip $0\leq |\mathrm{Im} \, z | \leq b$ the action of $W$ reads 
\begin{equation} \label{eq:W-f}
W^{-1} f(x) W = f(x+ib).
\end{equation} 
There is remarkable five term relation \cite{FKV-CMP,Woronowicz-q-exp} connecting the non-compact quantum dilogarithm function and Heisenberg pairs
\begin{equation} \label{eq:five-term}
\varphi_b(p)\varphi_b(x) = \varphi_b(x) \varphi_b(p+x) \varphi_b(p),
\end{equation}
which will be used in the proof of the final result of the paper.

\begin{Prop}
For real $\alpha$, $\beta$ let $\boldsymbol{S}(\alpha,\beta)$  be the unitary operator on $\mathcal{H}\otimes \mathcal{H}$ given by
\begin{equation} \label{eq:b-S}
\boldsymbol{S}(\alpha,\beta) = \mathrm{e}^{i\pi \mu (\mu - \lambda)} \varphi_b(\lambda - \mu + 1\otimes x - p\otimes 1) \mathrm{e}^{2 \pi i \mu (1\otimes x - p\otimes 1)} \mathrm{e}^{2 \pi i(x - p )\otimes p},
\end{equation}
where $\alpha = \mathrm{e}^{2\pi b \lambda}$, and $\beta = \mathrm{e}^{2\pi b \mu}$. Then:\\
1) $\boldsymbol{S}(\alpha,\beta)$ is the intertwiner satisfying \eqref{eq:S-hat},
in particular we have 
\begin{align} \label{eq:bS-P-w1-ten}
\boldsymbol{S}(\alpha,\beta) ( W\otimes 1) \boldsymbol{S}(\alpha,\beta)^{-1} & = (\alpha 1\otimes U + \beta W\otimes 1 ) (1 \otimes W^{-1})   , \\
\label{eq:bS-P-u1-ten}
\boldsymbol{S}(\alpha,\beta) (U\otimes 1) \boldsymbol{S}(\alpha,\beta)^{-1}  & = U \otimes W^{-1}, \\
\label{eq:bS-P-w2-ten}
\boldsymbol{S}(\alpha,\beta) (1\otimes W) \boldsymbol{S}(\alpha,\beta)^{-1}  & = (W\otimes W) (\alpha 1\otimes U + \beta W\otimes 1)^{-1}, \\
\label{eq:bS-P-u2-ten}
\boldsymbol{S}(\alpha,\beta) (1\otimes U) \boldsymbol{S}(\alpha,\beta)^{-1}  &= (U\otimes U)(\alpha 1\otimes U + \beta W\otimes 1)^{-1}.
\end{align}
2) It satisfies the 
quantum dynamical pentagon equation
\eqref{eq:pentagon-quant-dynam-P}.
\end{Prop}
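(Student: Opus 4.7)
The plan splits into two parts. For part~(1), I would verify the four intertwining relations \eqref{eq:bS-P-w1-ten}--\eqref{eq:bS-P-u2-ten} by conjugating each of the generators $W\otimes 1$, $U\otimes 1$, $1\otimes W$, $1\otimes U$ through the three factors of $\boldsymbol{S}(\alpha,\beta)$ in turn. The rightmost factor $\mathrm{e}^{2\pi i(x-p)\otimes p}$ is a metaplectic (quadratic-exponential) transformation that produces the mixing between the two tensor slots by shifting $1\otimes x \mapsto 1\otimes x + (x-p)\otimes 1$ and $p\otimes 1 \mapsto p\otimes 1 - 1\otimes p$ in the appropriate Heisenberg sense. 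The middle Weyl exponential $\mathrm{e}^{2\pi i \mu(1\otimes x - p\otimes 1)}$ contributes the overall scalar multiplications by $\beta$ and $\beta^{-1}$ coming from the identification $\beta = \mathrm{e}^{2\pi b\mu}$. Finally, conjugation by $\varphi_b(\lambda - \mu + 1\otimes x - p\otimes 1)$ produces the characteristic factor $\bigl(1 + \mathrm{e}^{2\pi b(\lambda - \mu + 1\otimes x - p\otimes 1)}\bigr)^{\pm 1}$ via the difference equation \eqref{eq:q-dil-b} precisely when the generator in question shifts the dilogarithm's argument by $\pm ib$. Under $\alpha = \mathrm{e}^{2\pi b\lambda}$, $\beta = \mathrm{e}^{2\pi b\mu}$ and $q = \mathrm{e}^{\pi i b^2}$, these factors collect (after absorbing commutation phases with $q$) into the sum $\alpha\, 1\otimes U + \beta\, W\otimes 1$, yielding the right-hand sides of \eqref{eq:bS-P-w1-ten}--\eqref{eq:bS-P-u2-ten}. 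The prefactor $c = \mathrm{e}^{i\pi\mu(\mu-\lambda)}$ in \eqref{eq:b-S} is calibrated to enforce unitarity, which follows from $\overline{\varphi_b(z)} = 1/\varphi_b(\bar z)$ together with the self-adjointness of $x$ and $p$ and the reality of $\lambda,\mu$ (Proposition~\ref{prop:hermit}).

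For part~(2), I would leverage part~(1): since $\boldsymbol{S}(\alpha,\beta)$ implements the Veblen map $S(\alpha,\beta)$ as an inner automorphism on $\DD_q(2)$, both sides of \eqref{eq:pentagon-quant-dynam-P} implement the same inner automorphism on the three-fold analogue $\DD_q(3)$ by Proposition~\ref{prop:pentagon-S-q} and the constraints \eqref{eq:relation-a-b-P}. Hence the two sides agree up to a central element, which on the Schr\"odinger representation collapses to a scalar. To identify that scalar I would then expand both products, commute the three exponential prefactors (which merely shift dilogarithm arguments by $c$-numbers) out to the extremes, and use the constraints \eqref{eq:relation-a-b-P} to align the arguments of the five quantum dilogarithms into the canonical form required by the five-term identity \eqref{eq:five-term}, namely
\begin{equation*}
\varphi_b(p)\varphi_b(x) = \varphi_b(x)\varphi_b(p+x)\varphi_b(p).
\end{equation*}
The residual Gaussian-exponential prefactors then cancel by the Baker--Campbell--Hausdorff formula, which terminates after the second commutator since all further commutators in the Heisenberg algebra are central.

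The hard part will be the bookkeeping in part~(2): one must demonstrate that the constraints \eqref{eq:relation-a-b-P} are not merely necessary (as already known from the classical case) but also precisely what is needed to reduce the two sides to matching arguments of \eqref{eq:five-term}. Equivalently, one has to verify that the Gaussian phases generated by commuting the three $\mathrm{e}^{2\pi i(x-p)\otimes p}$-type factors past the dilogarithms on both sides cancel identically, which requires an explicit calculation using the Heisenberg relation $[x,p]=i/(2\pi)$ in each of the three pairs of tensor slots. A secondary technical point is ensuring that the scalar rescaling permitted in the passage from \eqref{eq:S-hat} to \eqref{eq:pentagon-quant-dynam-P} is fixed consistently across the three copies $\boldsymbol{S}_{12}, \boldsymbol{S}_{13}, \boldsymbol{S}_{23}$, so that the final equality holds as an identity of operators on $\mathcal{H}^{\otimes 3}$ rather than merely as an equality of outer automorphisms.
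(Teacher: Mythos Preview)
Your proposal is correct and follows essentially the same approach as the paper: for part~(1) the paper likewise invokes the difference equation \eqref{eq:q-dil-b}, the shift action \eqref{eq:W-f}, and BCH; for part~(2) it moves the non-dilogarithmic factors $\sigma = \mathrm{e}^{2\pi i\mu(1\otimes x - p\otimes 1)}\mathrm{e}^{2\pi i(x-p)\otimes p}$ to one side, cancels the dilogarithms via the five-term relation \eqref{eq:five-term} with $x = \lambda_V + \mu_Z - \mu_X + x_3 - p_2$ and $p = \lambda_U - \mu_Y - \mu_Z + x_2 - p_1$, and verifies the residual identity $\sigma_{23}^X\sigma_{13}^Y\sigma_{12}^Z = \mathrm{e}^{i\pi(\mu_Y^2 + 2\mu_X\mu_Y + 2\mu_X\mu_Z + 2\mu_Y\mu_Z)}\sigma_{12}^U\sigma_{23}^V$ by BCH. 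The only organizational difference is that the paper chooses to carry out the direct verification independently of the ``up to a central scalar'' reduction you invoke from part~(1), whereas you use that reduction to motivate why only a scalar needs checking; the underlying computation is the same.
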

\begin{proof}
To show the first point we use property \eqref{eq:q-dil-b} of the function $\varphi_b$, the representation \eqref{eq:W-f} of the shift action of the exponentiated momentum operator, and the Baker--Campbell--Hausdorff (BCH) formula. Details are known and can be found, for example, in \cite{FrenkelKim}. 

In order to conclude the proof of the second point (the transition from the functional to quantum pentagon equation has been discussed earlier) we have to check the scalar factor in front of the operator part of $\boldsymbol{S}(\alpha,\beta)$. We will do it however in a way independent of the previous part by showing directly that the intertwiner satisfies the quantum dynamical pentagon equation. Denote by 
\begin{equation*}
\sigma = \mathrm{e}^{2 \pi i \mu (1\otimes x - p\otimes 1)} \mathrm{e}^{2 \pi i(x - p )\otimes p}
\end{equation*}
the third part in the structure of $\boldsymbol{S}$ apart the normalization and dilogarithmic parts. After shifting all $\sigma_{jk}$ in the pentagon equation to the right and cancelling the dilogarithmic part of the equation using the five term relation \eqref{eq:five-term} with
\begin{equation*}
x = \lambda_V + \mu_Z - \mu_X + x_3 - p_2, \qquad p = \lambda_U - \mu_Y - \mu_Z + x_2 - p_1,
\end{equation*}
we are left with the equation
\begin{equation} \label{eq:qP-red}
\sigma_{23}^X \sigma_{13}^Y \sigma_{12}^Z = \mathrm{e}^{i\pi (\mu_Y^2 + 2 \mu_X \mu_Y + 2 \mu_X \mu_Z +2 \mu_Y \mu_Z)} \sigma_{12}^U \sigma_{23}^V,
\end{equation}
which can be proven by BCH exchanges.
\end{proof}
\begin{Cor}
The Weyl reduction of the normalization map (see Remark after 
Proposition~\ref{prop:pentagon-S-q} and Corollary~\ref{cor:norm-P}) corresponds to the case $\lambda=\mu=0$ in formula \eqref{eq:b-S}. In an equivalent form such a solution of the quantum pentagon equation was used by Kashaev~\cite{Kashaev-LCC} in the quantum Teichm\"{u}ller theory.
\end{Cor}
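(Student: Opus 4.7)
The plan is to derive the corollary in two short steps, leveraging Corollary~\ref{cor:norm-P} together with the intertwining identities of the preceding Proposition; the quantum Teichmüller comparison is then a conventional bookkeeping step.

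First, I would substitute $\lambda=\mu=0$ into the defining formula \eqref{eq:b-S} for $\boldsymbol{S}(\alpha,\beta)$. Under the parametrisation $\alpha=\mathrm{e}^{2\pi b\lambda}$, $\beta=\mathrm{e}^{2\pi b\mu}$ introduced in the preceding Proposition, the point $(\lambda,\mu)=(0,0)$ is precisely $(\alpha,\beta)=(1,1)$. At this value the scalar phase $\mathrm{e}^{i\pi\mu(\mu-\lambda)}$ and the middle Gaussian $\mathrm{e}^{2\pi i\mu(1\otimes x-p\otimes 1)}$ both collapse to the identity, and the argument of the dilogarithm simplifies from $\lambda-\mu+1\otimes x-p\otimes 1$ to $1\otimes x-p\otimes 1$. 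Hence
\begin{equation*}
\boldsymbol{S}(1,1) \;=\; \varphi_b(1\otimes x-p\otimes 1)\,\mathrm{e}^{2\pi i(x-p)\otimes p}.
\end{equation*}

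Second, I would invoke Corollary~\ref{cor:norm-P} together with the Remark immediately following Proposition~\ref{prop:pentagon-S-q}. Those results assert that, after the relabelling $(w,u)\mapsto(y,x)$, the Weyl reduction of the normalization map $\bar S$ coincides with the ultra-local Veblen map $S(\alpha,\beta)$ at $\alpha=\beta=1$, both as an evolution on Weyl generators and as a solution of the functional pentagon equation. Combined with the intertwining identities \eqref{eq:bS-P-w1-ten}--\eqref{eq:bS-P-u2-ten} specialised at $(\alpha,\beta)=(1,1)$, this shows that the operator $\boldsymbol{S}(1,1)$ displayed above implements, by inner conjugation, exactly the Weyl reduction of the normalization map. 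This is the first assertion of the corollary.

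For the second assertion, I would match $\boldsymbol{S}(1,1)$ with the pentagon operator used by Kashaev in \cite{Kashaev-LCC}. In Heisenberg canonical pairs $(\hat q_i,\hat p_i)$ on $L^2(\RR)\otimes L^2(\RR)$ Kashaev's basic intertwiner takes the form
\begin{equation*}
\mathbf{T}\;=\;\varphi_b(\hat q_1+\hat p_2-\hat q_2)\,\mathrm{e}^{2\pi i\hat p_1\hat q_2},
\end{equation*}
up to conventional sign and ordering choices. Identifying $\hat q_i$ with $x$ in the $i$-th tensor factor and $\hat p_i$ with $p$, and redistributing the Gaussian factor by the Weyl form of the Baker--Campbell--Hausdorff formula, produces exactly the expression for $\boldsymbol{S}(1,1)$ derived in the first step.

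The main (and essentially the only) obstacle is the conventional bookkeeping in this last step: signs, orderings of coordinate and momentum operators, and the precise form of the Gaussian prefactor differ from one reference to another in the non-compact quantum dilogarithm literature, so the equivalence with Kashaev's $\mathbf{T}$ is unique only up to standard Weyl/BCH rearrangements. The substantive mathematical content of the corollary is the trivial specialisation $\lambda=\mu=0$ performed in the first paragraph.
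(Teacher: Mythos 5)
Your proposal is correct and follows exactly the route the paper intends: the corollary is stated without proof as an immediate consequence of the parametrisation $\alpha=\mathrm{e}^{2\pi b\lambda}$, $\beta=\mathrm{e}^{2\pi b\mu}$ (so $\lambda=\mu=0$ gives $\alpha=\beta=1$) together with Corollary~\ref{cor:norm-P}, which is precisely your first two steps. Your additional matching with Kashaev's operator, modulo sign and ordering conventions, is consistent with the paper's qualifier ``in an equivalent form'' and adds nothing that conflicts with the intended argument.
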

\begin{Rem}
Essentially the same formula as \eqref{eq:b-S} giving a solution of the quantum pentagon equation was obtained in \cite{Sergeev-book} as a limit of a solution of Zamolodchikov's tetrahedron equation.
\end{Rem}

\section{Conclusion and perspectives}
The same elementary incidence geometric structures, the Veblen and Desargues configurations, which are behind the non-commutative discrete KP equation and its integrability, provide solutions to the pentagon equation. There is a natural question how large is the class of the geometric (in the above sense) solutions of the functional or quantum pentagon equation within all the solutions. We note that in the commutative case, as it was demonstrated recently \cite{ABS-octahedron}, the discrete KP equation (in various equivalent forms) is the only multidimensionally consistent equation of octahedron type.

This corresponds to the already mentioned observation \cite{BaMaSe-P} that all presently known solutions of the quantum tetrahedron equation can be obtained, by a canonical quantization, from classical solutions of the functional tetrahedron equation describing \cite{BaMaSe,Sergeev-q3w} four dimensional consistency of quadrilateral lattices \cite{MQL}. It would be interesting to study relation of the quantum pentagon and tetrahedron equations, as initiated in \cite{Kashaev-Sergeev}. As it was shown in \cite{Dol-Des}, on the division ring level Desargues maps are equivalent to quadrilateral lattices, provided we take also their Laplace transforms \cite{DCN,TQL} into consideration.

On the level of commutative discrete three dimensional multidimensionally consistent systems there are not known other then the geometric ones. These are the Hirota system, describing Desargues maps, and its reductions:
\begin{itemize}
\item discrete Darboux equations \cite{BoKo} describing multidimensional quadrilateral lattices \cite{MQL}, whose relation to Desargues maps was given in \cite{Dol-Des};
\item the Miwa (discrete B-KP equation) \cite{Miwa} describing the B-quadrilateral lattices~\cite{BQL};
\item discrete C-KP equation~\cite{Kashaev-LMP,Schief-JNMP,TsarevWolf} describing the C- (or symmetric) quadrilateral lattices \cite{DS-sym,CQL}.
\end{itemize}

As far as two dimensional integrable systems and integrable maps are concerned, there is abundance of examples of their derivation as reductions of the Hirota equation, directly or via the three dimensional systems listed above, see the very incomplete list \cite{KNW-KP,KNW-BKP,Noumi,GRT} which starts from the original paper \cite{Hirota} of Hirota. In fact, the possibility of writing of a given system in Hirota-like bilinear form is a paradigm of the integrable system theory.

It is remarkable that the the Veblen flip, which turned out to be so fundamental in the domain of integrable discrete systems, supplemented by the very physical ultra-locality condition singles out the Weyl commutation relations, modulo the possible different choice of the gauge function $G$, which however does not affect the geometric structure.

\section*{Acknowledgments}
A. D. would like to thank Piotr Stachura and an anonymous Referee for their criticism, which helped to improve presentation. The work of A. D. was supported in part by the Ministry of Science and Higher Education grant No. N~N202~174739.

\bibliographystyle{amsplain}

\begin{thebibliography}{10}

\bibitem{Artamonov}
V. A. Artamonov, \emph{Automorphisms of division rings of quantum rational functions}, Mat. Sb. \textbf{191} (2000) 3--26.


\bibitem{ABS-consistency}
V. E. Adler, A. I. Bobenko, Yu. B. Suris, \emph{Classifcation of integrable equations on quadgraphs. The consistency approach}, Comm. Math. Phys. \textbf{233} (2003) 513--543.


\bibitem{ABS-octahedron}
V. E. Adler, A. I. Bobenko, Yu. B. Suris, \emph{Classification of integrable discrete equations of octahedron type}, Int. Math. Res. Notices, \textbf{2012} (2012) 1822--1889.

\bibitem{BaajSkandalis}
S. Baaj, G. Skandalis, \emph{Unitaires multiplicatifs et dualit\'{e} pour les produits crois\'{e}s de $C^*$-alg\`{e}bres}, Ann. Sci. \'{E}cole Norm. Sup. \textbf{26} (1993) 425--488.

\bibitem{Barnes}
E. W. Barnes, \emph{The theory of the double gamma function}, Phil. Trans. Roy. Soc. A \textbf{196} (1901) 265--388.

\bibitem{Baxter} 
R. J. Baxter, \emph{Exactly solved models in statistical mechanics}, Academic Press, New York, 1982. 


\bibitem{BaSe}
V. Bazhanov and S. Sergeev, \emph{Zamolodchikov tetrahedron equation and hidden structure of quantum groups}, J. Phys. A: Math. Gen. \textbf{39} (2006) 3295–-3310.

\bibitem{BaMaSe}
V. V. Bazhanov, V. V. Mangazeev, and S. M. Sergeev, \emph{Quantum geometry of
three-dimensional lattices}, J. Stat. Mech.: Th. Exp. (2008) P07004.

\bibitem{BaMaSe-P}
V. V. Bazhanov, V. V. Mangazeev, and S. M. Sergeev, \emph{Quantum geometry of
3-dimensional lattices and tetrahedron equation}, [in:] XVIth International Congress on Mathematical Physics, P. Exner (ed.), pp. 23-44, World Scientific Publishing, 2010.

\bibitem{BeukenhoutCameron-H}
F. Beukenhout, P. Cameron, \emph{Projective and affine geometry over division
rings}, [in:] Handbook of incidence geometry, F. Beukenhout (ed.),
pp.~27--62, Elsevier, Amsterdam, 1995.

\bibitem{Bobenko-talk}
A. I. Bobenko, \emph{From discrete differential geometry to classification of
discrete integrable systems}, talk given at the Workshop \emph{Quantum
Integrable Discrete Systems}, 23--27 March 2009, Isaak Newton Institute for
Mathematical Sciences, Cambridge UK,
\texttt{http://www.newton.ac.uk/programmes/DIS/seminars/032610006.html}.

\bibitem{BS-c1}
A. I. Bobenko, Yu. B. Suris, \emph{Integrable systems on quad-graphs}, Int. Math. Res. Notices \textbf{11} (2002) 573--611.

\bibitem{BS-nc}
A. I. Bobenko, Yu. B. Suris, \emph{Integrable non-commutative equations on quad-graphs. The consistency approach}, Lett. Math. Phys. \textbf{61} (2002) 241-254.

\bibitem{BoKo}
L. V. Bogdanov, B. G. Konopelchenko, \emph{Lattice and $q$-difference
Darboux--Zakharov--Manakov systems via $\bar\partial$ method}, J. Phys. A: Math.
Gen. \textbf{28} L173--L178.

\bibitem{BrownGoodearl}
K. A. Brown, K. R. Goodearl, \emph{Lectures on algebraic quantum groups}, Birkh\"{a}user, 2002.

\bibitem{Cayley}
A. Cayley, \emph{Sur quelques th\'{e}or\`{e}mes de la g\'{e}om\'{e}trie de
position} (1846), Collected Mathematical Papers, vol. 1, 1889, pp. 317--328.

\bibitem{CDS}
J. Cie\'{s}li\'{n}ski, A. Doliwa, and P. M. Santini,
{\it  The integrable discrete analogues of orthogonal coordinate systems
are multidimensional circular lattices},
Phys. Lett. A {\bf 235} (1997) 480--488.


\bibitem{Cohn-constructions}
P. M. Cohn, \emph{Skew field constructions}, LMS Lecture Notes \textbf{27}, University Press, Cambridge, 1977.

\bibitem{DJM-II}
E. Date, M. Jimbo, T. Miwa, \emph{Method for generating discrete soliton
equations. II}, J. Phys. Soc. Japan \textbf{51} (1982) 4125--31.

\bibitem{DKJM}
E.~Date, M.~Kashiwara, M. Jimbo, T.~Miwa, \emph{Transformation groups for
soliton equations}, [in:] Proceedings of RIMS Symposium on Non-Linear
Integrable Systems --- Classical Theory and Quantum Theory (M. Jimbo and T.
Miwa, eds.), pp. 39--119, World Science Publishing Co., Singapore, 1983.

\bibitem{Dieudonne}
J. Dieudonn\'{e}, \emph{The historical development of algebraic geometry}, Amer. Math. Monthly \textbf{79} (1972) 827--866.


\bibitem{DCN}
A. Doliwa, {\it Geometric discretisation of the Toda system},
Phys. Lett. A {\bf 234} (1997) 187--192.

\bibitem{BQL}
A. Doliwa, \emph{The B-quadrilateral lattice, its transformations and the
algebro-geometric construction}, J. Geom. Phys. \textbf{57} (2007) 1171--1192.

\bibitem{CQL} A. Doliwa, {\it The C-(symmetric) quadrilateral lattice, 
its transformations and the 
algebro-geometric construction},  J. Geom. Phys. \textbf{60}  (2010) 690--707.


\bibitem{Dol-Des} 
A. Doliwa, \emph{Desargues maps and the Hirota--Miwa equation}, Proc. R. Soc. A
\textbf{466} (2010) 1177--1200.

\bibitem{Dol-AN} 
A. Doliwa, {\it The affine Weyl group symmetry of Desargues maps 
and of the non-commutative Hirota--Miwa system}, Phys. Lett. A {\bf 375} (2011) 1219--1224.

\bibitem{MQL}
A.~Doliwa, P.~M. Santini, \emph{Multidimensional quadrilateral lattices 
are  integrable}, Phys. Lett. A \textbf{233} (1997), 365--372.

\bibitem{DS-sym}
A.~Doliwa, P.~M. Santini, \emph{The symmetric, {D}-invariant and {E}gorov 
reductions of the quadrilateral lattice}, J. Geom. Phys. \textbf{36} (2000) 60--102.

\bibitem{TQL}
A. Doliwa, P. M. Santini and M. Ma{\~n}as,
\emph{Transformations of quadrilateral lattices}, J. Math. Phys. \textbf{41}
(2000)  944--990.

\bibitem{Drinfeld-qg}
V. G. Drinfeld, \emph{Quantum groups}, [in:] Proc. ICM, Berkeley, \textbf{1}, AMS, Providence, 1987, pp. 789--820.

\bibitem{Drinfeld-fYB}
V. G. Drinfeld, \emph{On some unsolved problems in quantum group theory}, [in:] Quantum Groups, Lecture Notes in Mathematics \textbf{1510}, Springer, New York, 1992, pp. 1--8.

\bibitem{Faddeev-LMP}
L. D. Faddeev, \emph{Discrete Heisenberg--Weyl group and modular group}, Lett. Math. Phys. \textbf{34} (1995) 249--254.

\bibitem{FKV-CMP}
L. D. Faddeev, R. M. Kashaev, A. Yu. Volkov, \emph{Strongly coupled quantum Liouville theory. I: Algebraic approach and duality}, Commun. Math. Phys. \textbf{219} (2001) 199-219.

\bibitem{FarbDennis}
B. Farb, R. K. Dennis, \emph{Non-commutative algebra}, Springer, 1993.

\bibitem{FoGo}
V. V. Fock, A. B. Goncharov, \emph{The quantum dilogarithm and representations of quantum cluster varieties}, Invent. Math. \textbf{175} (2009) 223--286.

\bibitem{FrenkelKim}
I. B. Frenkel, H. K. Kim, \emph{Quantum Teichm\"{u}ller space from quantum plane}, \texttt{arXiv:1006.3895v1}.

\bibitem{GRT}
B. Grammaticos, A. Ramani, T. Tamizhmani, \emph{Investigating the integrability of the Lyness mappings}, J. Phys. A: Math. Theor. \textbf{42} (2009) 454009. 


\bibitem{Grunbaum}
B. Gr\"{u}nbaum, \emph{Configurations of points and lines}, American Mathematical Society, 2009.

\bibitem{Hirota}
R. Hirota,
\textit{Discrete analogue of a generalized Toda equation},
J. Phys. Soc. Jpn. {\bf 50} (1981) 3785--3791.

\bibitem{JimboMiwa} 
M. Jimbo, T. Miwa, \emph{Algebraic analysis of solvable lattice models}, AMS, Providence 1995.

\bibitem{KNW-KP}
S. Kakei, J. J. C. Nimmo, R. Willox, \emph{Yang--Baxter maps and the discrete KP hierarchy}, Glasg. Math. J. \textbf{51}A (2009) 107–-119.

\bibitem{KNW-BKP}
S. Kakei, J. J. C. Nimmo, R. Willox, \emph{Yang--Baxter maps from the discrete BKP equation}, Symmetry, Integrability and Geometry: Methods and Applications SIGMA \textbf{6} (2010) 028 (11 pp.).



\bibitem{Kassel}
Ch. Kassel, \emph{Quantum groups}, Springer, 1995.

\bibitem{Kashaev-LMP}
R. M. Kashaev, \emph{On discrete three-dimensional equations associated with the local Yang-Baxter relation} Lett. Math. Phys. \textbf{38} (1996) 389--397.

\bibitem{Kashaev-P}
R. M. Kashaev, \emph{Quantization of Teichm\"{u}ller spaces and quantum dilogarithm}, Lett. Math. Phys. \textbf{43} (1998) 105--115. 

\bibitem{Kashaev}
R. M. Kashaev, \emph{The Heisenberg double and the pentagon equation}, St. Petersburg Math. J. \textbf{8} (1997) 585--592. 

\bibitem{Kashaev-LCC}
R. M. Kashaev, \emph{The Liouville central charge in quantum Teichm\"{u}ller theory}, Proc. Steklov Inst. Math. \textbf{226} (1999) 63--71. 

\bibitem{Kashaev-Sergeev}
R. M. Kashaev, S. M. Sergeev, \emph{On pentagon, ten-term, and tetrahedron equations}, Commmun. Math. Phys. \textbf{195} (1998) 309--319.

\bibitem{KLT-S}
S. Kharchev, D. Lebedev, M. Semenov-Tian-Shansky, \emph{Unitary representations of $U_q(sl(2,\RR))$, the modular double and the multiparticle $q$-deformed Toda chains}, Commun. Math. Phys. \textbf{225} (2002) 573--609.

\bibitem{KoSchief-Men}
B.~G. Konopelchenko, W.~K. Schief, \emph{Menelaus' theorem, Clifford
configuration and inversive geometry of the Schwarzian KP hierarchy},  
J. Phys. A: Math. Gen. \textbf{35}  (2002) 6125--6144.

\bibitem{QISM}
V. E. Korepin, N. M. Bogoliubov, A. G. Izergin, \emph{Quantum inverse scattering method and correlation functions}, University Press, Cambridge, 1993. 


\bibitem{KNS-rev}
A. Kuniba, T. Nakanishi, J. Suzuki,
{\it $T$-systems and $Y$-systems in integrable systems},
J. Phys. A: Math. Theor. {\bf 44} (2011) 103001 (146pp).

\bibitem{KustermansVaes}
J. Kustermans, S. Vaes, \emph{Locally compact quantum groups}, Ann. Sci. \'{E}cole Norm. Sup. \textbf{33} (2000) 837--934.

\bibitem{Lam}
T. Y. Lam, \emph{A first course in non-commutative rings}, Springer, 1991.

\bibitem{Lu}
J.-H. Lu, \emph{On the Drinfeld double and the Heisenberg double of a Hopf algebra}, Duke Math. J. \textbf{74} (1994) 763--776.

\bibitem{McConnellPettit}
J. C. Mc Connell, J. J. Pettit, \emph{Crossed products and multiplicative analogues of Weyl algebras}, J. London Math. Soc. \textbf{38} (1988) 47--55.

\bibitem{Miwa} T.~Miwa, \emph{On Hirota's difference equations}, 
  Proc. Japan Acad. \textbf{58} (1982) 9--12.

\bibitem{Militaru}
G. Militaru, \emph{Heisenberg double, pentagon equation, structure and classification of finite-dimensional {H}opf algebras},  J. London Math. Soc. \textbf{69} (2004)    44--64. 

\bibitem{Nijhoff-c}
F. W. Nijhoff, \emph{Lax pair for the Adler (lattice Krichever--Novikov) system}, Phys. Lett. A \textbf{297} (2002) 49--58.

\bibitem{FWN-Capel}
F. W. Nijhoff, H. W. Capel, \emph{The direct linearization approach to
hierarchies of integrable PDEs in $2+1$ dimensions: I. Lattice equations and the
differential-difference hierarchies}, Inverse Problems \textbf{6} (1990)
567--590.

\bibitem{Nimmo-NCKP}
J. J. C. Nimmo, \emph{On a non-Abelian Hirota-Miwa equation}, 
J. Phys. A: Math. Gen. \textbf{39} (2006) 5053--5065. 

\bibitem{Noumi}
M. Noumi, \emph{Painlev\'{e} equations through symmetry}, AMS, Providence, 2004.

\bibitem{PapTonVes}
V. G. Papageorgiou, A. G. Tongas, A. P. Veselov, \emph{Yang--Baxter maps and symmetries of integrable equations on quad-graphs}, J. Math. Phys. \textbf{47} (2006) 083502 (16 pp.).

\bibitem{Ruijsenaars-JMP}
S. N. M. Ruijsenaars, \emph{First order analytic difference equations and integrable quantum systems}, J. Math. Phys. \textbf{38} (1997) 1069--1146.

\bibitem{Ruijsenaars-JNMP}
S. N. M. Ruijsenaars, \emph{A unitary joint eigenfunction transform for the A$\Delta$Os} $\mathrm{exp}(ia_\pm d/dz) + \mathrm{exp}(2\pi z / a_\mp)$, J.~Nonl. Math. Phys. \textbf{12} Suppl. 2 (2005) 253--294.

\bibitem{Schief-JNMP}
W.~K. Schief, \emph{Lattice geometry of the discrete Darboux, KP, BKP and CKP
equations. Menelaus' and Carnot's theorems}, J. Nonl. Math. Phys. \textbf{10} 
Supplement 2 (2003) 194--208.

\bibitem{Sergeev-finite}
S. M. Sergeev, \emph{Quantum integrable models in discrete $2+1$-dimensional space-time: auxiliary linear problem on a lattice, zero-curvature representation, isospectral deformation of the Zamolodchikov--Bazhanov--Baxter model}, Physics of Particles and Nuclei \textbf{35} (2004) 1--31.

\bibitem{Sergeev-q3w}
S. M. Sergeev, \emph{Quantization of three-wave equations}, J. Phys. A: Math.
Theor. \textbf{40} (2007) 12709--12724.

\bibitem{Sergeev-super}
S. M. Sergeev, \emph{Supertetrahedra and superalgebras},
J. Math. Phys. \textbf{50} (2009) 083519  (21 pages).


\bibitem{Sergeev-book}
S. M. Sergeev, \emph{Mathematics of quantum integrable systems in multidimensional discrete space-time (A preliminary draft of a book, version 4)}, \texttt{http://ise.canberra.edu.au/mathphysics/files/2009/09/book5.pdf} (access 19.04.2011).

\bibitem{Shintani}
T. Shintani, \emph{On a Kronecker limit formula for real quadratic fields}, Proc. Japan Acad. \textbf{52} (1976) 355--358.

\bibitem{STS}
M. A. Semenov-Tian-Shanskii, \emph{Poisson Lie groups. Quantum duality principle and twisted quantum doubles}, Theor. Math. Phys.  \textbf{93} (1992) 1292--1307.

\bibitem{Sklyanin}
E. K. Sklyanin, \emph{Classical limits of $\mathrm{SU(2)}$-invariant solutions of the Yang--Baxter equation}, J. Soviet Math. \textbf{40} (1988) 93--107.

\bibitem{Timmermann} 
T. Timmermann, \emph{An invitation to quantum groups and duality. From Hopf algebras to multiplicative unitaries and beyond}, EMS Textbooks in Mathematics, European Mathematical Society Publishing House, Z\"{u}rich, 2008.

\bibitem{TsarevWolf}
S. P. Tsarev, T. Wolf, \emph{Hyperdeterminants as integrable discrete system} J. Phys. A: Math. Theor. \textbf{42} (2009) 454023 (9 pages).

\bibitem{DaeleKeer}
A. Van Daele, S. Van Keer, \emph{The Yang--Baxter and pentagon equation}, Compositio Mathematica \textbf{91} (1994) 201--221.

\bibitem{Veselov-YB}
A. P. Veselov, \emph{Yang--Baxter maps and integrable dynamics}, Phys. Lett. A \textbf{314} (2003) 214--221.

\bibitem{Woronowicz-P}
S. L. Woronowicz, \emph{From multiplicative unitaries to quantum groups}, Internat. J. Math. \textbf{7} (1996) 127--149.

\bibitem{Woronowicz-q-exp}
S. L. Woronowicz, \emph{ Quantum exponential function}, Rev. Math. Phys. \textbf{12} (2000) 837--920.

\bibitem{Zakrzewski}
S. Zakrzewski, \emph{Poisson Lie groups and pentagonal transformations}, Lett. Math. Phys. \textbf{24} (1992) 13-19.


\bibitem{Zamolodchikov}
A. B. Zamolodchikov, \emph{Tetrahedron equations and the relativistic $S$-matrix of straight-strings in $2+1$ dimensions}, Commun. Math. Phys. \textbf{79} (1981) 489--505.

\end{thebibliography}

\providecommand{\bysame}{\leavevmode\hbox to3em{\hrulefill}\thinspace}

\end{document}